\documentclass[10pt, twosided]{article}

\usepackage{amssymb, amsfonts, amsmath, amsthm, mathrsfs, verbatim,graphicx,graphics,epsfig,psfrag,mathtools, bm, bbm}

\usepackage{authblk}%------------Author/email in article class

\DeclareMathAlphabet{\mathpzc}{OT1}{pzc}{m}{it}

\usepackage{mathabx}
\usepackage{microtype} %-----Improve spacing

\usepackage{color, xcolor}

%----------Underlying Bar---------------------------
\usepackage{accents}
\newcommand{\ubar}[1]{{\underaccent{\bar}{\smash{#1}}}}
%---------------------------------------------------

%--Lowercase Caligraphic
%\usepackage[utf8]{inputenc}
%\usepackage{bickham}
%\usepackage{boondox-cal}
%\usepackage{BOONDOX-calo}
%\usepackage[bbscaled=.95,cal=boondoxo]{mathalfa}
%\usepackage{dutchcal}
%--Lowercase Caligraphic

\usepackage{abstract}
\usepackage[all]{xy}

%--------------------------------Geometry

\textwidth=125mm
\textheight=185mm
\parindent=8mm
\evensidemargin=0pt
\oddsidemargin=0pt
\frenchspacing

%\usepackage[left= 3.4 cm, right= 3.4 cm, top= 2.8 cm, bottom=2.7 cm, foot= 1.2 cm, marginparwidth=2.7 cm, marginparsep=0.5 cm]{geometry}
%\linespread{1.4}

%------------------------------------------------------
%------------------------------------------------------

\usepackage[inline]{enumitem}

%------Labeling-------------
\usepackage{hyperref}
%\hypersetup{colorlinks}
%\definecolor{darkred}{rgb}{0.5,0,0}
%\definecolor{darkgreen}{rgb}{0,0.5,0}
%\definecolor{darkblue}{rgb}{0,0,0.5}
%\usepackage[notcite, notref]{showkeys}
%\hypersetup{colorlinks, linkcolor=darkblue, filecolor=darkgreen, urlcolor=darkred, citecolor=darkblue}
\makeatletter % `@' now normal "letter"
\@addtoreset{equation}{section}
\makeatother  % `@' is restored as "non-letter"

\numberwithin{equation}{section}

%------Labeling End---------

%-------TIKZ------
\usepackage{tikz-cd}
\usepackage{tikz}
\usetikzlibrary{arrows}
\usepackage{float}

\usetikzlibrary{decorations.pathreplacing}
\usepackage{etex}
%-------TIKZ------

%----------------TOC---------------------------

\makeatletter
\def\@tocline#1#2#3#4#5#6#7{\relax
  \ifnum #1>\c@tocdepth % then omit
  \else
    \par \addpenalty\@secpenalty\addvspace{#2}%
    \begingroup \hyphenpenalty\@M
    \@ifempty{#4}{%
      \@tempdima\csname r@tocindent\number#1\endcsname\relax
    }{%
      \@tempdima#4\relax
    }%
    \parindent\z@ \leftskip#3\relax \advance\leftskip\@tempdima\relax
    \rightskip\@pnumwidth plus4em \parfillskip-\@pnumwidth
    #5\leavevmode\hskip-\@tempdima
      \ifcase #1
       \or\or \hskip 1em \or \hskip 2em \else \hskip 3em \fi%
      #6\nobreak\relax
    \dotfill\hbox to\@pnumwidth{\@tocpagenum{#7}}\par
    \nobreak
    \endgroup
  \fi}
\makeatother

%--------------TOC END----------------------

%\usepackage{amssymb,verbatim,graphicx}

%\usepackage{hyperref}
%\input psfig.sty
%\textwidth14 cm
%\oddsidemargin1cm  
%\evensidemargin1cm

\setcounter{tocdepth}{3}

\newtheorem{thm}{Theorem}[section]
\newtheorem{cor}[thm]{Corollary}

\newtheorem{conjecture}[thm]{Conjecture}
\newtheorem{prop}[thm]{Proposition}
\newtheorem{lemma}[thm]{Lemma}

\theoremstyle{definition}
\newtheorem{defn}[thm]{Definition}
\theoremstyle{remark}
\newtheorem{rem}[thm]{Remark}
\newtheorem{hyp}[thm]{Hypothesis}

\newtheorem{notation}[thm]{Notation}

\newcounter{notes}
{\end{list}}

\makeatletter
\@addtoreset{thm}{section}
\makeatother

%Specific to vortdiv.tex

% Operatornames

\newcommand\qu{/\kern-.7ex/} % Categorical quotients
\renewcommand{\setminus}{\smallsetminus}

%--------------Guangbo's newcommands-------------

\newcommand{\beq}{\begin{equation}}
\newcommand{\eeq}{\end{equation}}
\newcommand{\beqn}{\begin{equation*}}
\newcommand{\eeqn}{\end{equation*}}
\newcommand{\ov}{\overline}
\newcommand{\mb}{\mathbb}
\renewcommand{\tt}{\texttt}
\newcommand{\mc}{\mathcal}
\newcommand{\mf}{\mathfrak}

\newcommand{\V}{{\rm V}}
\newcommand{\T}{{\rm T}}
\newcommand{\E}{{\rm E}}

\newcommand{\preq}{\preccurlyeq}

\renewcommand{\i}{{\bf i}}

\newcommand{\Gammait}{{\mathit{\Gamma}}}

\newcommand{\Thetait}{{\mathit{\Theta}}}

%--------------Guangbo's newcommands end---------

\title{Gauged Linear Sigma Model in the Geometric Phase. I.}

\begin{document}

\author[Tian]{Gang Tian}

\affil[Tian]{BICMR and SMS, Beijing University, 100871, P.R.China}
%\email{tian@math.princeton.edu}

\author[Xu]{Guangbo Xu
\thanks{The second named author is partially supported by the Simons Foundation through the Homological Mirror Symmetry Collaboration grant and NSF DMS-2345030.}
}
\affil[Xu]{Department of Mathematics, Rutgers University, Piscataway, NJ 08854-8019, USA}
%\email{gxu@scgp.stonybrook.edu}

%\date{\today}

%------------------TOC----------------------

\setcounter{tocdepth}{2}

% set the table of contents up to sections

\maketitle

\begin{abstract}
We construct correlation functions a gauged linear sigma model space in the geometric phase and prove that they satisfy the splitting axioms of the cohomological field theory. The method we use is from symplectic geometry and gauge theory, rather than algebraic geometry. These correlation functions are expected to match the Gromov--Witten invariants of the classical vacuum up to a change of variable, and is expected to match various other algebraic geometric constructions.
\end{abstract}

{\it Keywords}: gauged linear sigma model, gauged Witten equation, vortices, moduli spaces, virtual cycle

%{\it Mathematics Subject Classification 2010}: Primary 58J05, Secondary 53D45

\tableofcontents

\section{Introduction}

The gauged linear sigma model (GLSM) is a two-dimensional supersymmetric quantum field theory introduced by Witten \cite{Witten_LGCY}. It has stimulated many important developments in both the mathematical and the physical studies of string theory and mirror symmetry. For example, it plays a fundamental role in physicists' argument of mirror symmetry \cite{Hori_Vafa}\cite{Gu_Sharpe_2018}. Its idea is also of crucial importance in the verification of genus zero mirror symmetry for quintic threefold \cite{Givental_96}\cite{LLY_1}. \footnote{By the name of ``gauged linear sigma model,'' we always mean the case that the superpotential $W$ is nonzero. On the technical level, a nonzero $W$ makes a significant difference from the case when $W = 0$. However, in many literature, particularly physics literature, GLSM also include the case that $W = 0$ and many works only deal with this simpler case.} 

Since 2012 the authors have initiated a project aiming at constructing a mathematically rigorous theory for GLSM, using mainly the method from symplectic and differential geometry. In our previous works \cite{Tian_Xu, Tian_Xu_2, Tian_Xu_3}, we have constructed certain correlation function (i.e. Gromov--Witten type invariants) under certain special conditions: the gauge group is $U(1)$ and the superpotential is a Lagrange multiplier. This correlation function is though rather restricted, as for example, we do not know if they satisfy splitting properties or not. 

A major difficulty in the study of the gauged Witten equation is in the so-called {\it broad} case, in which the Fredholm property of the equation is problematic. This issue also causes difficulties in various algebraic approaches. We realized that if we restrict to the so-called geometric phase, then the issue about broad case disappears in the symplectic setting. This idea has been outlined in \cite{Tian_Xu_2017} and the detailed construction is given in the current and the companion paper \cite{Tian_Xu_geometric_2}.

A natural question is the relation between the GLSM correlation functions and the Gromov--Witten invariants. They are both multilinear functions on the cohomology group of the classical vacuum of the GLSM space, which is a compact K\"ahler manifold. The relation between them is a generalization of the {\it quantum Kirwan map}, proposed by D. Salamon, %studied by Ziltener \cite{Ziltener_book}, and 
proved by Woodward \cite{Woodward_15} in the algebraic case (also see the recent work by the second-named author \cite{Xu_Steenrod} in the symplectic setting under the monotonicity assumption). In short words, the two types of invariants are related by the ``mirror map.'' The mirror map is defined by counting pointlike instantons, which are solutions to the gauged Witten equation over the complex plane ${\mb C}$. This picture also has a physics explanation by Morrison--Plesser \cite{Morrison_Plesser_1995}. The enumerative definition of the mirror map and the proof of the relation between GLSM correlation functions and Gromov--Witten invariants will be provided in a forthcoming paper \cite{Limit2}.

\subsection{The main theorem}\label{subsection11}

In this paper we define Gromov--Witten type invariants of the geometric objects called {\it gauged linear sigma model} (GLSM) spaces. A {\it GLSM space} is a quadruple $(V, G, W, \mu)$ where $V$ is a K\"ahler manifold (usually noncompact), $G$ is a reductive Lie group with maximal compact subgroup $K$ acting on $V$, $W$ (the {\it superpotential}) is a $G$-invariant holomorphic function on $V$, and $\mu$ is a moment map of the $K$-action, which specifies a stability condition. An important hypothesis of this paper is the {\it geometric phase} condition which implies that the classical vacuum
\beqn
X:= {\rm Crit} W\qu G \cong ({\rm Crit} W \cap \mu^{-1}(0))/K
\eeqn
is a compact K\"ahler manifold. In a typical example (see Subsection \ref{subsection32}), $X$ is a hypersurface in the projective space, including the famous quintic three-fold.

The invariants (which we call GLSM correlation functions) are defined via the virtual intersection theory over the moduli space of the {\it gauged Witten equation}. The domains of this equation are $r$-spin curves (where $r$ is the degree of the superpotential). An {\it $r$-spin curve} is a quadruple ${\mc C} = (\Sigma, {\bf z}, L, \varphi)$ where $\Sigma$ is an orbifold curve, ${\bf z}$ is the set of orbifold marked points, $L$ is an orbifold line bundle over $\Sigma$, and $\varphi$ is an isomorphism from $L^{\otimes r}$ to the log-canonical bundle of $(\Sigma, {\bf z})$ (see details in Section \ref{section2}). The variables of the gauged Witten equation are called gauged maps. A {\it gauged map} from an $r$-spin curve ${\mc C}$ to the GLSM space $(V, G, W, \mu)$ is a triple $(P, A, u)$, where $P$ is a principal $K$-bundle over $\Sigma$, $A$ is a connection on $P$, and $u$ is a section of the fibre bundle over $\Sigma$ associated to $P$ and $L$ (with structure group being $K \times U(1)$ and fibre being $V$). A gauged map induces the following objects: 
\begin{enumerate}
\item The covariant derivative $d_A u$ and its $(0,1)$-part $\ov\partial_A u$.

\item The curvature $F_A$.

\item The moment map potential $\mu(u)$.

\item The gradient $\nabla W(u)$.
\end{enumerate}
The gauged Witten equation reads
\begin{align*}
&\  \ov\partial_A u  + \nabla W (u) = 0,\ &\ * F_A + \mu(u) = 0.
\end{align*}
The gauged Witten equation is a generalization of the {\it symplectic vortex equation} introduced by Mundet \cite{Mundet_thesis, Mundet_2003} and Cieliebak--Gaio--Salamon \cite{Cieliebak_Gaio_Salamon_2000}. 

A crucial point in our setting is that we use {\it cylindrical} metrics on domain curves. As a consequence, solutions have well-defined evaluations at marked points in the classical vacuum. When $(V, G, W, \mu)$ is in {\it geometric phase} (Definition \ref{defn33}) the gauged Witten equation has a Morse--Bott type asymptotic behavior near marked points. This feature resolves the previous difficulty in the broad case of the gauged Witten equation studied in \cite{Tian_Xu, Tian_Xu_2, Tian_Xu_3}.

The remaining construction can be carried out via the standard, if not at all simple, procedure. One can define a natural compactification $\ov{\mc M}{}_{g, n}^r(V, G, W, \mu)$ of the moduli space of gauge equivalence classes of stable solutions. There is a continuous evaluation map 
\beqn
\ov{\mc M}{}_{g, n}^r(V, G, W, \mu) \to X^n \times \ov{\mc M}_{g, n}.
\eeqn
The Fredholm property of the equation allows us to carry out the sophisticated virtual cycle construction (which we give more detailed remarks in Subsection \ref{subsection12}). The whole construction can be concluded in the following main theorem. Let $\Lambda$ be the Novikov field
\beqn
\Lambda:= \Big\{ \sum_{i=1}^\infty a_i T^{\lambda_i}\ |\ a_i \in {\mb Q},\ \lambda_i \in {\mb R},\ \lim_{i \to\infty} \lambda_i = +\infty \Big\}.
\eeqn

\begin{thm}\label{thm11}
For a GLSM space $(V, G, W, \mu)$ in the geometric phase, there are a collection of multilinear functions
\beqn
\langle \cdot \rangle_{g,n}^{\rm GLSM}: \underbrace{H^*(X; \Lambda) \otimes \cdots \otimes H^*(X; \Lambda)}_{n} \otimes H^*( \ov{\mc M}_{g,n}; \Lambda) \to \Lambda,\ 2g + n \geq 3.
\eeqn
defined by the virtual integration over the moduli $\ov{\mc M}{}_{g,n}^r(V, G, W, \mu)$. Moreover, the correlation functions satisfying the following properties:

\begin{enumerate}

\item {\bf (Splitting Property for Non-Separating Nodes)} Suppose $2g + n > 3$. Let $\gamma \in H^2(\ov{\mc M}_{g,n}; {\mb Q})$ be the class dual to the divisor of configurations obtained by shrinking a non-separating loop, which is the image of a map $\iota_\gamma: \ov{\mc M}_{g-1, n+2} \to \ov{\mc M}_{g, n}$. Then 
\beqn
\langle \alpha_1 \otimes \cdots \otimes \alpha_n; \beta \cup \gamma \rangle_{g, n}^{\rm GLSM} = \langle \alpha_1 \otimes \cdots \otimes \alpha_n \otimes \Delta; \iota_\gamma^* \beta \rangle_{g-1, n+2}^{\rm GLSM}.
\eeqn
Here $\Delta = \sum_j \delta_j \otimes \delta^j \in H^*(X \times X)$ is the Poincar\'e dual to the diagonal of $X$.

\item {\bf (Splitting Property for Separating Nodes)} Let $\gamma \in H^2(\ov{\mc M}_{g, n}; {\mb Q})$ be class dual to the divisor of configurations obtained by shrinking a separating loop, which is the image of a map $\iota_\gamma: \ov{\mc M}_{g_1, n_1 + 1} \times \ov{\mc M}_{g_2, n_2 + 1} \to \ov{\mc M}_{g, n}$, also characterized by a decomposition $\{1, \ldots, n\} = I_1 \sqcup I_2$ with $|I_1| = n_1$, $|I_2| = n_2$. Suppose $2g_i  - 2 + n_i \geq 0$. Then for $\beta \in H^*(\ov{\mc M}_{g, n}; {\mb Q})$, if we write $\iota_\gamma^* \beta = \sum_l \beta_{1, l} \otimes \beta_{2, l}$ by K\"unneth decomposition, then 
\begin{multline*}
\langle \alpha_1 \otimes \cdots \otimes \alpha_n; \beta \cup \gamma \rangle_{g, n}^{\rm GLSM} \\
= \epsilon(I_1, I_2) \sum_{j, l} \langle \alpha_{I_1} \otimes \delta_j; \beta_{1, l} \rangle_{g_1, n_1+1}^{\rm GLSM} \langle \alpha_{I_2} \otimes \delta^j; \beta_{2, l} \rangle_{g_2, n_2+1}^{\rm GLSM}.
\end{multline*}
Here $\epsilon(I_1, I_2)$ is the sign of permutations of odd-dimensional $\alpha_i$'s.
\end{enumerate}
\end{thm}

The splitting properties of the correlation functions verify the most important axioms of the Cohomological Field Theory (see \cite{Manin_2}). 

\begin{rem}
Besides our symplectic approaches, there are other approaches toward mathematical theories of the GLSM using algebraic geometry. These include the quasimap theory developed by Ciocan-Fontanine--Kim--Maulik \cite{CK_2010, CKM_quasimap, CK_quasimap, CCK_quasimap, CK_2017, CK_bigi, CK_2020}, the theory of Mixed-Spin-P fields by Chang--Li--Li--Liu \cite{CLLL_2019, CLLL_16} (for the Fermat quintic), the GLSM theory of Fan--Jarvis--Ruan \cite{FJR_GLSM}, as well as the categorical approach of Ciocan-Fontanine {\it et al.} \cite{CFGKS}.

The general advantage of our approach lies in at least two aspects. First, as emphasized above, our approach (as well as that of \cite{CFGKS}) allows one to have broad insertions, allowing us to obtain a cohomological theory. In contrast, the parallel algebraic geometric approach of \cite{FJR_GLSM} cannot deal with broad insertions. Second, our method can be generalized the open-string situation where we can impose Lagrangian boundary condition; it is also possible to extend our setting of the gauged Witten equation to non-K\"ahler case if appropriate conditions of the superpotential and the action are imposed.
\end{rem}

\subsection{Virtual cycle}\label{subsection12}

In the companion paper \cite{Tian_Xu_geometric_2} we provide the detailed virtual cycle construction (as well as a few other technical results) needed for the definition of the GLSM correlation functions. In symplectic geometry, the virtual cycle theory arose in mathematicians' efforts in defining Gromov--Witten invariants for general symplectic manifolds. Jun Li and the first named author have their approach in both the algebraic case \cite{Li_Tian_2} and the symplectic case \cite{Li_Tian}. Meanwhile there were also other approaches such as the method of Kuranishi structure of Fukaya--Ono \cite{Fukaya_Ono} (further expanded by Fukaya--Oh--Ohta--Ono \cite{FOOO_Kuranishi}). Recently there have been various new developments such as the polyfold method of Hofer--Wysocki--Zehnder \cite{HWZ1, HWZ2, HWZ3} and the algebraic topological approach of Pardon \cite{Pardon_virtual} and Abouzaid \cite{Abouzaid_axiomatic}. Our virtual cycle construction essentially follows the topological viewpoint of Li--Tian \cite{Li_Tian}.

\subsection{Outline}

In Section \ref{section2} we review the basic notion about $r$-spin curves and results about the existence of cylindrical metrics. In Section \ref{section3} we set up the gauged Witten equation. In Section \ref{section4} we prove a few important analytical properties of solutions to the gauged Witten equation. In Section \ref{section5} we give the compactification of the moduli spaces. In Section \ref{section6} we state the results about the virtual cycle and construct the GLSM cohomological field theory.

\subsection{Acknowledgement}

We thank Wei Gu, Mauricio Romo, and Jake Solomon for stimulating discussions. We thank Alexander Kupers for kindly answering questions about topological transversality. 

\section{Moduli Spaces of Stable $r$-Spin Curves}\label{section2}

In this section we describe the moduli space of stable $r$-spin curves from the differential-geometric point of view. One can compare with the algebraic approaches in for example
\cite{Jarvis_1998, Jarvis_2000}\cite{Jarvis_Kimura_Vaintrob}\cite{Abramovich_Jarvis_2002}\cite{Chiodo_2008}, or the more general case of $W$-curves in \cite{FJR_annals}.

\subsection{$r$-spin curves}\label{subsection21}

In this discussion there are three categories of curves, ordinary complex curves (Riemann surfaces), orbifold curves, and $r$-spin curves. To unify notations, orbifold curves are usually denoted by $\Sigma$ while $r$-spin curves are usually denoted by ${\mc C}$. When we need to discuss their underying coarse curves, we usually still use $\Sigma$.

\subsubsection{Orbifold curves and orbifold bundles}

A one-dimensional complex orbifold is called an orbifold Riemann surface or an orbifold curve. In this paper, we impose the following conditions and conventions.

\begin{enumerate}

\item Orbifold curves are effective orbifolds with finitely many orbifold points. 

\item We always assume that an orbifold curve is marked. Moreover, the set of orbifold points are contained in the set of markings. However, a marking may not be a strict orbifold point.
\end{enumerate}

Let $\Sigma$ be an orbifold curve with the ordered set of markings $\vec{\bf z} = (z_1, \ldots, z_n)$. Near each $z_a$ there exists an orbifold chart of the form 
\beqn
(U_a, \Gammait_a) \cong ({\mb D}, {\mb Z}_{r_a}),\ r_a \geq 1.
\eeqn
Here ${\mb D}\subset {\mb C}$ is the open unit disk and ${\mb Z}_{r_a} \subset U(1)$ acts on $\Sigma$ in the standard fashion.

The notion of nodal orbifold curves is slightly more complicated. At a node $w$ a nodal orbifold curve has a chart of the form
\beq\label{nodalchart}
(U_w, \Gammait_w) \cong \big( \{ (\xi_-, \xi_+) \in {\mb D}^2\ |\ \xi_- \xi_+ = 0 \big\}, {\mb Z}_{r_w} \big),\ r_w \geq 1,
\eeq
where the ${\mb Z}_{r_w}$-action on $U_w$ is given by 
\beqn
\gamma (\xi_-, \xi_+) = (\gamma^{-1} \xi_-, \gamma \xi_+).
\eeqn

Given a smooth orbifold curve $\Sigma$, an orbifold line bundle is $L$ is a complex orbifold with a holomorphic map $\pi: L \to \Sigma$ which, over non-orbifold points has local trivializations as an ordinary holomorphic line bundle, while over an orbifold chart $(U_a, \Gammait_a)$ there is a chart of the form 
\beqn
(\tilde U_a, \Gammait_a) \cong ( U_a \times {\mb C}, {\mb Z}_{r_a})
\eeqn
where the ${\mb Z}_{r_a}$-action on the ${\mb C}$-factor is given by a weight $n_a$, i.e. 
\beqn
\gamma( z, t) = (\gamma z, \gamma^{n_a} t).
\eeqn
We call the element $e^{2\pi q {\bf i}} \in {\mb Z}_r$, where $q = \frac{ n_a}{r_a}$ the {\it monodromy} of the line bundle at $z_a$. When $\Sigma$ is nodal, we require that an orbifold line bundle $L \to \Sigma$ has local charts at a node $w$ of the form 
\beq\label{eqn22}
(\tilde U_w, \Gammait_w) \cong \big( \{ (\xi_-, \xi_+, t) \in {\mb D} \times {\mb D} \times {\mb C}\ |\ \xi_- \xi_+ = 0\}, {\mb Z}_{r_w} \big)
\eeq
where the ${\mb C}$-factor still has the linear action by the local group ${\mb Z}_{r_w}$. Then with respect to the normalization map $\tilde \Sigma \to \Sigma$, the pull-back bundle is an orbifold line bundle whose monodromies at preimages of a node are opposite. 

To define $r$-spin curves, one needs the notion of log-canonical bundle. Let $\Sigma$ be a smooth or nodal Riemann surface and $\pi: \tilde \Sigma \to \Sigma$ be the normalization. Let $\tilde w_1, \ldots, \tilde w_m \in \tilde \Sigma$ be the preimages of nodal points. The canonical bundle $K_\Sigma \to \Sigma$ is a well-defined line bundle; there is an isomorphism
\beqn
\pi^* K_\Sigma \cong K_{\tilde \Sigma} \otimes \bigotimes_{b=1}^m {\mc O}(\tilde w_b).
\eeqn

\begin{defn}{\rm (Log-canonical bundle)}
Let $(\Sigma, \vec{\bf z})$ be a smooth or nodal Riemann surface with markings $z_1, \ldots, z_n$. Its {\it log-canonical bundle} is the line bundle
\beqn
K_{\Sigma, {\rm log}} = K_\Sigma \otimes \bigotimes_{a=1}^n {\mc O}( z_a).
\eeqn
For an orbifold curve $\Sigma$, its log-canonical bundle is the pull-back of the log-canonical bundle of its underlying coarse curve.
\end{defn}

\begin{rem}
When we discuss a single marked curve $(\Sigma, \vec{\bf z})$, over the complement of markings and nodes we can trivialize ${\mc O}(z_a)$ hence $K_{\Sigma, {\rm log}}$ can be regarded as the canonical bundle of the punctured surface. However, when we discuss a family of curves, there is no canonical family of trivializations of the bundles ${\mc O}(z_a)$ over the punctured surface. We make such choices in Subsection \ref{subsection23}. 
\end{rem}

\subsubsection{$r$-spin curves}

Now we introduce a central concept of this paper, called $r$-spin curves. 

\begin{defn}\label{defn23} {\rm ($r$-spin curve)} Let $r$ be a positive integer.
\begin{enumerate}
\item A smooth or nodal $r$-spin curve of type $(g, n)$ is a quadruple ${\mc C} = (\Sigma_{\mc C}, \vec{\bf z}_{\mc C}, L_{\mc C}, \varphi_{\mc C})$ where $(\Sigma_{\mc C}, \vec{\bf z}_{\mc C})$ is a genus $g$, $n$-marked smooth or nodal orbifold curve, $L_{\mc C} \to \Sigma_{\mc C}$ is a holomorphic orbifold line bundle, and $\varphi_{\mc C}$ is an orbibundle isomorphism 
\beqn
\varphi_{\mc C}: L_{\mc C}^{\otimes r} \cong K_{\Sigma_{\mc C}, {\rm log}}.
\eeqn
We usually denote an $r$-spin curve as a quadruple ${\mc C} = (\Sigma_{\mc C}, \vec{\bf z}_{\mc C}, L_{\mc C}, \varphi_{\mc C})$. The log-canonical bundle of $\Sigma_{\mc C}$ is often denoted by $K_{{\mc C}, {\rm log}}$.

\item Let ${\mc C}_i = (\Sigma_{{\mc C}_i}, \vec{\bf z}_{{\mc C}_i}, L_{{\mc C}_i}, \varphi_{{\mc C}_i})$, $i = 1, 2$ be two $r$-spin curves. An {\it isomorphism} from ${\mc C}_1$ to ${\mc C}_2$ consists of an isomorphism of orbifold bundles (represented by the following commutative diagram)
\beqn
\vcenter{ \xymatrix{ L_{ {\mc C}_1} \ar[r]^{\tilde\rho} \ar[d]_{\pi_1} & L_{{\mc C}_2} \ar[d]^{\pi_2} \\
             \Sigma_{{\mc C}_1}   \ar[r]^\rho    & \Sigma_{{\mc C}_2}     } }
\eeqn
such that the following induced diagram commutes
\beq\label{eqn23}
\vcenter{ \xymatrix{ L_{{\mc C}_1}^{\otimes r} \ar[r]^{(\tilde\rho)^{\otimes r}}  \ar[d]_{\varphi_{{\mc C}_1}} & L_{{\mc C}_2}^{\otimes r} \ar[d]^{\varphi_{{\mc C}_2}   } \\ 
             K_{{{\mc C}_1}, {\rm log}}\ar[r]^{\rho} & K_{{{\mc C}_2}, {\rm log}}} }.
\eeq

\end{enumerate}

\end{defn}

\begin{rem} {\rm (Minimality of the local group)}
We require that the orbifold structures are the markings or nodes of ${\mc C}$ are {\it minimal} in the following sense. Take a local chart of $L_{\mc C}$ at a marking $z_a$
\beqn
({\mb D} \times {\mb C}, {\mb Z}_{r_a})
\eeqn
where the action of ${\mb Z}_{r_a}$ on the ${\mb C}$-factor has weight $n_a$. Then the isomorphism $\varphi_{\mc C}: L_{\mc C}^{\otimes r} \cong K_{{\mc C}, {\rm log}}$ implies that $r n_a/ r_a$ is an integer, hence there exists an integer $m_a \in \{0, 1, \ldots, r-1\}$ such that
\beqn
\frac{ n_a}{r_a} = \frac{m_a}{r}.
\eeqn
By abuse of notations, the integer $m_a$ is also called the {\it monodromy} of the $r$-spin structure at $z_a$. We require that $n_a$ and $r_a$ are coprime. Similar requirement is imposed for nodes. In other words, the group generated by the monodromies of $L_{\mc C}$ at a marking or a node is the same as the local group of the orbifold curve at that marking or node. In particular, when the monodromy of $L_{\mc C}$ is trivial, the marking or node is not a strict orbifold point. 
\end{rem}

\subsubsection{Infinite cylinders}

The only unstable $r$-spin curves we consider in this paper are topologically spheres with two marked points. As the complement of markings is a cylinder we also refer them to as infinite cylinders. These rational curves are classified by an element $m \in \{0, 1, \ldots, r-1\}$. For $m$ define $n_m$ and $r_m$ to be smallest nonnegative integers such that 
\beqn
\frac{n_m}{r_m} = \frac{m}{r}.
\eeqn
There is an $r$-spin curve whose underlying orbifold curve is 
\beqn
(\mb{CP}^1/{\mb Z}_{r_m}, 0, \infty)
\eeqn
and the orbifold line bundle is 
\beqn
L_m:= (\mb{CP}^1 \times {\mb C})/ {\mb Z}_{r_m}
\eeqn
where ${\mb Z}_{r_m}$ acts on the ${\mb C}$-factor with weight $n_m$. Up to isomorphism of $r$-spin structures there is only one isomorphism $\varphi_m: L_m^{\otimes r} \cong K_{{\rm log}}$. 

It is often more convenient to view these rational curves from the cylindrical perspective. Notice that the complement of the marked points $0$ and $\infty$ is isomorphic to the smooth curve
\beqn
\Thetait:= {\mb R} \times S^1.
\eeqn
Let $w = s + \i t$ be the standard cylindrical coordinate. The log-canonical bundle restricted to $\Thetait$ is trivial with a canonical section $dw$. Then there is a holomorphic section $e: \Thetait \to L_m$ such that 
\beqn
\varphi_m( e^{\otimes r}) = e^{m w} dw. 
\eeqn

\subsection{Moduli space of stable $r$-spin curves}\label{subsection22}

Every $r$-spin curve has an underlying marked smooth or nodal curve defined by forgetting the $r$-spin structure and the orbifold structure. 

\begin{defn}[Stable $r$-spin curves]
An $r$-spin curve is called {\it stable} if the underlying marked smooth or nodal curve is stable.
\end{defn}

Indeed, an $r$-spin curve is stable if and only if it has a finite automorphism group. Let $\ov{\mc M}{}_{g,n}^r$ denote the moduli space of stable $n$-marked $r$-spin curves of genus $g$; for a stable $n$-marked $r$-spin curve ${\mc C}$ of genus $g$, denote by $[{\mc C}] \in \ov{\mc M}{}_{g,n}^r$ the point represented by ${\mc C}$. The space $\ov{\mc M}{}_{g,n}^r$ is a Deligne--Mumford stack and has the structure of a compact complex orbifold. There is the natural forgetful map
\beqn
\ov{\mc M}{}_{g,n}^r \to \ov{\mc M}_{g,n}.
\eeqn

We recall certain facts about the local structure of the moduli space of stable $r$-spin curves. A more detailed discussion can be found in \cite[Section 2]{Tian_Xu_geometric_2}. 

We first describe the convergence within the same stratum of $\ov{\mc M}{}_{g,n}^r$. Let ${\mc C} = (\Sigma, \vec{\bf z}, L, \varphi)$ be a stable $r$-spin curve. By abuse of notations, let $(\Sigma, \vec{\bf z})$ denote the underlying coarse curve with markings. Let $j_{\mc C}$ be the complex structure on the underlying coarse curve $\Sigma$. Let ${\mc V}_{\rm def}$ be the tangent space of the moduli space of stable marked curves at the point represented by $(\Sigma, {\bf z})$. Choose a metric on ${\mc V}_{\rm def}$ whose $\epsilon$-ball is denoted by ${\mc V}_{\rm def}^\epsilon$. Then for a sufficiently small $\epsilon$ there exist a family of complex structures 
\beqn
j_\eta \in \Gamma( \Sigma, {\rm End}_{\mb R}(T \Sigma )),\ \eta \in {\mc V}_{\rm def}^\epsilon
\eeqn
such that
\begin{enumerate}
\item $j_0 = j_{\mc C}$; $j_\eta = j_{\mc C}$ in a neighborhood of markings and nodes for all $\eta\in {\mc V}_{\rm def}^\epsilon$; $j_\eta$ depends smoothly on $\eta$.

\item Every stable marked curve of the same combinatorial type and is sufficiently close to $(\Sigma, {\bf z})$ in $\ov{\mc M}_{g,n}$ is isomorphic to $(\Sigma_\eta, {\bf z})$ for some $\eta \in {\mc V}_{\rm def}^\epsilon$. 
\end{enumerate}
One can make the construction equivariant with respect to the automorphism group of the curves (see \cite[Section 2]{Tian_Xu_geometric_2}). However equivariance is not a concern of the current paper. 

The family of marked nodal surfaces $(\Sigma_\eta, {\bf z})$ inherit the orbifold structure from ${\mc C}$. Moreover, when $\eta$ is sufficiently close to $0 \in {\mc V}_{\rm def}$, the original $r$-spin structure also extends to $(\Sigma_\eta, {\bf z})$, giving a family of stable $r$-spin curves denoted by
\beqn
{\mc C}_{\eta, 0},\ \eta\in {\mc V}_{\rm def}^\epsilon.
\eeqn

Points near $[{\mc C}]$ in $\ov{\mc M}{}_{g, n}^r$ can be obtained by resolving the nodes of ${\mc C}_{\eta, 0}$. For each node $w \in {\mc C}$, one can choose orbifold chart
\beqn
N_w:=\{ (\xi_-, \xi_+) \in {\mb D} \times {\mb D}\ |\ \xi_- \xi_+ = 0\}/ {\mb Z}_{r_w}
\eeqn
(see \eqref{nodalchart}). Set
\beqn
{\mc V}_{\rm res} = {\mb C}^{w \in {\rm E}_{\mc C}}
\eeqn
which is the space of gluing parameters $\zeta = (\zeta_w)_{w \in \E_{\mc C}}$. For each $\zeta \in {\mc V}_{\rm res}$ sufficiently close to the origin, one can define a family of $r$-spin curves 
\beqn
{\mc C}_{\eta, \zeta},\ \eta \in {\mc V}_{\rm def},\ \zeta \in {\mc V}_{\rm res}
\eeqn
as follows. Inside $\Sigma_\eta$, for each node $w$, if we replace $N_w$ by the annulus
\beqn
N_{w, \zeta}:= \{ ( \xi_-, \xi_+)\in {\mb D} \times {\mb D}\ |\ \xi_- \xi_+ = \zeta_w \}/ {\mb Z}_{r_w}
\eeqn
which can be identified via the exponential map to the long cylinder
\beq\label{eqn24}
r_w \exp^{-1}(N_{w, \zeta}) \cong (r_w \log |\zeta_w|, 0) \times S^1,
\eeq
then we obtain a new curve $\Sigma_{\eta, \zeta}$.

The above construction provides a family of orbifold curves $(\Sigma_{\eta, \zeta}, \vec{\bf z}_{\eta, \zeta})$. The line bundles $L_{\eta, 0}$ also extends canonically to an $r$-th root of the log-canonical bundle by replacing the bundle chart over $N_w$ to a bundle chart over $N_{w, \zeta}$. Indeed, if $L_{\eta, 0}$ has a nodal chart of the form 
\beqn
\{ (\xi_-, \xi_+, t) \in {\mb D} \times {\mb D} \times {\mb C}\ |\ \xi_- \xi_+ = 0 \}/ {\mb Z}_{r_w}
\eeqn
then one can define $L_{\eta, \zeta} \to \Sigma_{\eta, \zeta}$ by replacing the above chart with the chart
\beqn
\{ (\xi_-, \xi_+, t) \in {\mb D} \times {\mb D} \times {\mb C}\ |\ \xi_- \xi_+ = \zeta_w \}/ {\mb Z}_{r_w}.
\eeqn
Denote the family of $r$-spin curves by ${\mc C}_{\eta, \zeta}$. 

We state the following lemma which can be viewed as a description of the local structure of the moduli space of stable $r$-spin curves. 

\begin{lemma}(cf. \cite{Tian_Xu_geometric_2})\label{domainconverge}
Let ${\mc C}_i$ be a sequence of smooth $r$-spin curves of genus $g$ with $n$ marked points and let ${\mc C}_\infty$ be a stable $r$-spin curve of genus $g$ with $n$ marked points. Then the sequence defined by the isomorphism class of ${\mc C}_i$ converges to the isomorphism class of ${\mc C}_\infty$ in $\ov{\mc M}{}_{g, n}^r$ if and only after removing finitely many elements in the sequence ${\mc C}_i$, there exist a sequence of deformation parameters $\eta_i$, a sequence of gluing parameters $\zeta_i$, and a sequence of isomorphisms of $r$-spin curves
\beqn
{\mc C}_i \cong {\mc C}_{\eta_i, \zeta_i}.
\eeqn
\end{lemma}

Formally, if we set $r = 1$, then Lemma \ref{domainconverge} reduces to the well-known result about universal unfoldings of stable marked curves (see for example \cite{Robbin_Salamon_2006}). 

\begin{notation}\label{notation27}
For any compact subset $Z \subset \Sigma_{\mc C}^*$ which is disjoint from the special points of ${\mc C}$, when $|\zeta|$ is sufficiently small, there is a canonical embedding $Z \hookrightarrow \Sigma_{{\mc C}_{\eta, \zeta}}^*$ and we denote the image by 
\beqn
Z_{\eta, \zeta} \subset \Sigma_{{\mc C}_{\eta, \zeta}}^*.
\eeqn
Moreover, the explicit construction of the family ${\mc C}_{\eta, \zeta}$ provides a family of long cylinders in $\Sigma_{{\mc C}_{\eta, \zeta}}$ (see \eqref{eqn24}). For each node $w$ of ${\mc C}$, denote the long cylinder \eqref{eqn24} by 
\beqn
r_w \exp^{-1} (N_{w, \zeta}) \cong (-T_{\eta, \zeta}(w), T_{\eta, \zeta} (w) ) \times S^1.
\eeqn
\end{notation}

\subsection{Cylindrical metrics and other structures}\label{subsection23}

In this subsection we discuss several structures for all $r$-spin curves that vary smoothly over the moduli space $\ov{\mc M}{}_{g, n}^r$. First we construct a family of conformal Riemannian metrics that are of cylindrical type over all $r$-spin curves. Let $g \geq 0$, $n \geq 0$ with $2g + n \geq 3$. For any smooth genus $g$ curve $\Sigma$ with $n$ marked points, denote the punctured Riemann surface by $\Sigma^*$. Then for any local holomorphic coordinate $z$ centered at a marked point $z_a$, the coordinate $s + {\bf i} t = - \log z$ is called a {\it cylindrical coordinate} on $\Sigma^*$ near the puncture $z_a$. The punctured neighborhood identified with a cylinder $[T, +\infty) \times S^1$ is called a {\it cylindrical end}.

We would like to specify certain type of Riemannian metrics on punctured Riemann surfaces that are of cylindrical type. We always take metrics whose conformal class belongs to the conformal class defined by the complex structure, hence a conformal Riemannian metric is equivalent to its area two-form.

\begin{defn}
An conformal Riemannian metric on $\Sigma^*$ is called a {\it cylindrical metric} of perimeter $2\pi$ if near each puncture $z_a$, there exists a cylindrical end $U_a \cong [S, +\infty) \times S^1$ with coordinate $s + {\bf i} t$ such that the area two-form is 
\beqn
\nu = \sigma(s, t) ds dt
\eeqn
where $\sigma: [S, +\infty)\times S^1 \to {\mb R}_+$ satisfying
\beq\label{eqn25}
\sup_{[S, +\infty) \times S^1} \Big[ e^{s} \big| \nabla^l (\sigma - 1) \big| \Big] \leq C_l,\ \forall l \geq 0.\footnote{The constants $C_l$ may not be bounded when $l$ grows.}
\eeq
\end{defn}
It is easy to verify that the above exponential decay condition is independent of the choice of the local holomorphic coordinate $z = e^{s + {\bf i} t}$.

\begin{defn}\label{defn29}
A family of {\it cylindrical metrics} over $\ov{\mc M}_{g,n}$ with $2g + n \geq 3$ is a collection of conformal Riemannian metrics $h_\Sigma$ on all $\Sigma^*$ where $\Sigma^*$ is the complement of special points of a stable $n$-marked genus $g$ Riemann surface, all of which satisfy the following conditions.

\begin{enumerate}

\item $h_\Sigma$ is a cylindrical metric (of perimeter $2\pi$) on each irreducible component.

\item If $\rho: (\Sigma_1, \vec{\bf z}_1 ) \to (\Sigma_2, \vec{\bf z}_2)$ is an isomorphism, then $\rho^* h_{\Sigma_2} = h_{\Sigma_1}$. 

\item Let $(\Sigma_{\eta_i, \zeta_i}, \vec{\bf z}_{\eta_i, \zeta_i}), i= 1, 2, \cdots$ be a sequence of curves converging to $(\Sigma, {\bf z})$ (cf. the $r=1$ case of Lemma \ref{domainconverge}). Let the complements of special points of them be $\Sigma_i^*$ and $\Sigma^*$ respectively. Then for any compact subset $Z \subset \Sigma^*$ disjoint from marked or nodal points, for large $i$, the restriction of $h_{\Sigma_i}$ to $Z_{\eta_i, \zeta_i} \cong Z$ converges smoothly to $h_\Sigma|_Z$ (see Notation \ref{notation27}). Moreover, for each node $w$ of $\Sigma$ with corresponding gluing parameter $\zeta_{i, w}$ nonzero, for any sequence of cylinders $[T_i -1, T_i + 1]\times S^1$ contained in the long cylinder $(-T_{\eta_i, \zeta_i}(w), T_{\eta_i, \zeta_i}(w) ) \times S^1 \subset \Sigma_{{\eta_i, \zeta_i}}$ with $|T_i \pm T_{\eta_i, \zeta_i}(w)| \to \infty$, the restriction of $h_{\Sigma_i}$ to $[T_i-1, T_i+1]\times S^1$ converges in $C^\infty$ to the standard cylindrical metric on $[-1, 1]\times S^1$. 
\end{enumerate} 
\end{defn}

There is a straightforward way of constructing such families of metrics. We refer to \cite{Venugopalan_quasi} for a proof. Cylindrical metrics are also essentially used by Fan--Jarvis--Ruan \cite{FJR_virtual} in the construction of FJRW invariants. 

\begin{prop}\cite[Proposition 1.4]{Venugopalan_quasi}
There exists a family of cylindrical metrics over $\ov{\mc M}_{g,n}$ for $g, n$ with $2g + n \geq 3$. 
\end{prop}

From now on, we fix the choice of such collection of cylindrical metrics (of perimeter $2\pi$) on stable genus $g$, $n$-marked Riemann surfaces. When we have a stable $r$-spin curve, we equip the underlying punctured Riemann surface the cylindrical metric we choose. 

The family of cylindrical metric can be viewed as a family of Hermitian metrics on the canonical bundle $K_{{\mc C}}$ of ${\mc C}$ restricted to the complements of the markings and nodes. However, in the concept of $r$-spin structures we used the notion of log-canonical bundles. Over each stable $r$-spin curve ${\mc C}$, one has an isomorphism 
\beq\label{eqn26}
K_{{\mc C}, {\rm log}}|_{\Sigma_{\mc C}^*} \cong K_{\Sigma_{\mc C}^*}
\eeq
because the divisor ${\mc O}(\vec{\bf z}):= {\mc O}(z_1) \otimes \cdots \otimes {\mc O}(z_n)$ is trivial away from the markings. However there is no canonical isomorphism. 

Consider the universal curve $\ov{\mc U}_{g, n} \to \ov{\mc M}_{g, n}$. The markings give sections $S_1, \ldots, S_n: \ov{\mc M}_{g, n} \to \ov{\mc U}_{g, n}$ whose images are divisors $D_1, \ldots, D_n\subset \ov{\mc U}_{g, n} \cong \ov{\mc M}_{g, n+1}$. Then one can choose a holomorphic section of the line bundle $[D_1 + \cdots + D_n]$ that vanishes exactly along these divisors. Then restrict to each fibre, this section provides a family of trivializations of ${\mc O}(\vec {\bf z})$ away from the markings which are invariant under automorphisms and vary in a holomorphic way over $\ov{\mc M}_{g, n}$. From now on we choose such a section, upon which the isomorphism \eqref{eqn26} is regarded as canonical. 

We claim that the cylindrical metrics and the canonical isomorphisms \eqref{eqn26} induce on each stable $r$-spin curve ${\mc C} = (\Sigma_{\mc C}, \vec{\bf z}_{\mc C}, L_{\mc C}, \varphi_{\mc C})$ a Hermitian metric on $L_{\mc C}$ away from punctures and nodes. Indeed, given a local holomorphic coordinate $z$ on $\Sigma_{\mc C}$ and a local there exists a local holomorphic section $e$ of $L_{\mc C}$ such that
\beqn
\varphi (e^{\otimes r}) = dz. 
\eeqn
Then define $\| e \| = (\| dz \|)^{\frac{1}{r}}$. Different choices of $e$ differ by an element of ${\mb Z}_r$ hence this is a well-defined metric on $L_{\mc C}$. Further, near each marking or node at which the monodromy of $L_{\mc C}$ is $e^{\frac{ 2\pi m {\bf i}}{r}}$, for any holomorphic coordinate $z$, there is a unique (up to ${\mb Z}_r$) section $e$ of $L_{\mc C}$ such that 
\beqn
\varphi (e^{\otimes r}) = \frac{dz}{z} z^m.
\eeqn
The metric near this marking is then
\beq\label{eqn27}
\| e \| = | z|^{m-1} ( \| dz \|)^{\frac{1}{r}} = |z|^m \left( \| \frac{ dz}{z} \| \right)^{\frac{1}{r}}.
\eeq
Namely, $\| e \|$ behaves like $|z|^m$ near the marking. 

Let $P_{\mc C} \to \Sigma_{\mc C}^*$ be the unit circle bundle of $L_{\mc C}$, which is a principal $U(1)$-bundle over the punctured surface. The Hermitian metric on $L_{\mc C}$ induces the Chern connection $A_{\mc C} \in {\mc A}(P_{\mc C})$. \eqref{eqn27} implies that at a puncture where $m \neq 0$, the connection is singular, and, under the trivialization induced by the unitary frame $e / \| e \|$, the connection is 
\beqn
A_{\mc C} = d + \frac{{\bf i} m}{r} dt + \alpha = d + {\bf i} q dt + \alpha
\eeqn
where $\alpha$ decays exponentially in the cylindrical coordinates, meaning that there exist $\delta>0$ and for all $l \geq 0$ a constant $C_l>0$ such that
\beqn
|\nabla^l \alpha| \leq C_l e^{- \delta s}.
\eeqn

Later we will use the following result, whose proof is left to the reader.

\begin{lemma}\label{lemma211}
For fixed $g, n$, given a family of cylindrical metrics over $\ov{\mc M}_{g, n}$ in the sense of Definition \ref{defn29}, there exists $C>0$ such that, for all stable genus $g$, $n$-marked $r$-spin curves ${\mc C}$, we have
\beqn
\| F_{A_{\mc C}}\|_{L^\infty(\Sigma_{\mc C}^*)} \leq C
\eeqn
where the norm is taken with respect to the cylindrical metric on $\Sigma_{\mc C}^*$. Moreover, we can choose the family of metrics for all $g, n$ such that for the same $C$ the above inequality is true for all $g, n$.
\end{lemma}

We also need to specify the universal structures over infinite cylinders. As discussed before, the $r$-spin structures over an infinite cylinder is classified by the monodromies of a line bundle $L_m \to \Thetait$ at $-\infty$. Let the cylindrical coordinate be $w = s + {\bf i} t$. Then there exists a global holomorphic sections of $L_m$ such that
\beqn
\varphi_m (e^{\otimes r}) = e^{mw} dw.
\eeqn
Denote $q = \frac{m}{r}$. Equip with $\Thetait$ the standard cylindrical metric of perimeter $2\pi$. Then define the Hermitian metric on $L_m$ by  $\| e \| \equiv |z|^q$. Then $\epsilon_m:= |z|^{-q} e$ is a global smooth section of unit length. Trivialize $L_m$ by $\epsilon_m$, we see that the Chern connection reads
\beq\label{eqn28}
A_{\mc C} = d + {\bf i} q dt. 
\eeq

\begin{lemma}\label{lemma212}
Let ${\mc C}_{\eta_i, \zeta_i}$ be a sequence of stable $r$-spin curves converging to ${\mc C}$. Remove finitely many elements from this sequence if necessary. Let $\Sigma_i^*$ and $\Sigma^*$ be their complements of special points, respectively. Then for any compact subset $Z \subset \Sigma^*$, which can be canonically identified with $Z_i \subset \Sigma_i^*$, there exist canonical smooth bundle isomorphisms 
\beqn
\vcenter{ \xymatrix{ P_{{\mc C}_{\eta_i, \zeta_i}} \ar[r]^{\cong} \ar[d] & P_{{\mc C}} \ar[d]\\
                     Z_i                            \ar[r]^{\cong}        & Z } }
\eeqn
with respect to which $A_{{\mc C}_{\eta_i, \zeta_i}}$ converges smoothly to $A_{\mc C}$. Moreover, for any sequence of cylinders $[T_i -1, T_i+1]\times S^1$ contained in the sequence of long cylinders $(-T_{\eta_i, \zeta_i}(w), T_{\eta_i, \zeta_i}(w) )\times S^1$ (see Notation \ref{notation27}) with $|T_i \pm T_{\eta_i, \zeta_i}(w)| \to \infty$, over which $P_{{\mc C}_{\eta_i, \zeta_i}}$ is trivialized by the coordinates, the restriction of $A_{{\mc C}_{\eta_i, \zeta_i}}$ to $[T_i-1, T_i + 1]\times S^1$ converges in $C^\infty$ to the connection \eqref{eqn28} over $[-1, 1]\times S^1$.
\end{lemma}

\section{The Gauged Witten Equation}\label{section3}

In this section we set up the gauged Witten equation. This equation is essentially induced from the GLSM Lagrangian \cite{Witten_LGCY} and can be viewed as the generalization of the vortex equation and the Witten equation. The current setting has appeared in \cite{Tian_Xu_2017}, while a slightly different version was used in \cite{Tian_Xu, Tian_Xu_2, Tian_Xu_3} by the authors.

\subsection{The GLSM space}

\begin{defn}
A {\bf GLSM space} is a quadruple $(V, G, W, \mu)$ where
\begin{enumerate}

\item $V$ is a K\"ahler manifold with a holomorphic ${\mb C}^*$-action (called the {\it R-symmetry}). 

\item $G$ is a connected\footnote{Our theory can be extended to the case when $G$ is disconnected. The extension is routine but needs more involved treatment of nontrivial bundles over cylinders.} complex reductive Lie group acting holomorphically on $V$ such that: a) the $G$-action commutes with the R-symmetry; b) the restriction of the $G$-action to its maximal compact subgroup $K\subset G$ is Hamiltonian. 

\item $W: V \to {\mb C}$ is a $G$-invariant holomorphic function and is homogeneous of degree $r \geq 1$ with respect to the R-symmetry, namely
\beqn
W(\xi x) = \xi^r W(x),\ \forall \xi \in {\mb C}^*,\ x\in V.
\eeqn

\item $\mu: V \to {\mf k}^*$ is a moment map for the $K$-action with $0$ being a regular value.
\end{enumerate}
\end{defn}

Since the $G$-action commutes with the ${\mb C}^*$-action, they induce an action by $\ubar G:= G \times {\mb C}^*$ whose maximal compact subgroup is $\ubar K:= K \times U(1)$. For any vector $\xi$ in ${\mf k}$, $\mf{u}(1)$, or the Lie algebra $\ubar {\mf k}$ of $\ubar K$, let 
\beqn
{\mc X}_\xi \in \Gammait(TV)
\eeqn
denote the infinitesimal action of $\xi$. 

The following shows an important property of $W$.
\begin{lemma}
Any critical value of $W$ must be zero.
\end{lemma}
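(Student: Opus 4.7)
The plan is to use the Euler identity for homogeneous functions induced by the R-symmetry. By Hypothesis \ref{hyp31}\eqref{hyp31b}, $W$ is homogeneous of degree $r$ under the $\mb C^*$-action, i.e.
\[
W(\xi \cdot x) = \xi^r W(x) \quad \text{for all } \xi \in \mb C^*,\ x\in X.
\]
Let $x\in X$ be a critical point of $W$ and let $V \in \Gamma(TX)$ denote the holomorphic vector field generating the R-symmetry, defined by $V(x) = \tfrac{d}{d\xi}\big|_{\xi=1}(\xi\cdot x)$.

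First I would differentiate the homogeneity identity with respect to $\xi$ at $\xi = 1$. The left-hand side gives $dW(x)\cdot V(x)$ by the chain rule, and the right-hand side gives $r W(x)$. Thus
\[
dW(x)\cdot V(x) = r W(x).
\]
Since $x$ is a critical point of $W$, $dW(x) = 0$, so the left-hand side vanishes. Because $r \geq 1$, we conclude $W(x) = 0$, which proves the claim.

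The argument is essentially a one-line application of Euler's identity, so there is no real obstacle; the only thing to verify is that the R-symmetry vector field $V$ is everywhere defined on $X$ (which follows from the action being globally defined on $X$) and that the differentiation is justified, which is immediate since $W$ and the action are holomorphic.
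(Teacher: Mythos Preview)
Your proof is correct and is essentially identical to the paper's own argument: both differentiate the homogeneity relation $W(\xi\cdot x)=\xi^r W(x)$ using the infinitesimal R-symmetry vector field to obtain Euler's identity $dW(x)\cdot V(x)=rW(x)$, and then conclude from $dW(x)=0$ at a critical point.
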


\begin{proof}
By the homogeneity of $W$ with respect to the R-symmetry, at a critical point $x$ of $W$, we have
\beqn
0 = {\mc X}_{\eta} W(x) = r \eta  W(x).
\eeqn
where ${\mc X}_{\eta}$ is the infinitesimal ${\mb C}^*$-action for $\eta \in {\rm Lie}{\mb C}^*$. Hence $W(x) = 0$.
\end{proof}

Given a GLSM space $(V, G, W, \mu)$, define the semi-stable locus of $V$ as 
\beqn
V^{\rm ss}:= G \mu^{-1}(0) \subset V.
\eeqn
Denote by ${\rm Crit}W\subset V$ the critical locus of $W$ and denote
\beqn
({\rm Crit} W)^{\rm ss}:= {\rm Crit} W \cap V^{\rm ss}.
\eeqn

\begin{defn}\label{defn33}
A GLSM space $(V, G, W, \mu)$ is in the {\bf geometric phase} if $W$ is Morse--Bott in $V^{\rm ss}$, i.e., if $dW|_{V^{\rm ss}}$ intersects cleanly with the zero section of $T^*V $.
\end{defn}

For the rest of this paper, we will assume the following technical assumptions.
\begin{hyp}\label{hyp34}
\begin{enumerate}

\item There is a homomorphism $\iota_W: {\mb C}^* \to G$ such that
\beqn
\xi \cdot x = \iota_W (\xi) \cdot x,\ \forall x\in \ov{({\rm Crit} W)^{\rm ss}}.
\eeqn

\item The action of $e^{\frac{2\pi \i}{r}} \iota_W( e^{-\frac{2\pi\i}{r}}) \in U(1) \times K$ on $V$ is the identity.

\item The $K$-action on $\mu^{-1}(0)$ is free. 

\item The restriction $\mu$ to $\ov{({\rm Crit} W)^{\rm ss}}$ is proper. 

\item $V$ is symplectically aspherical: for any smooth map $f: S^2 \to V$ there holds 
\beqn
\int_{S^2} f^* \omega_V = 0.
\eeqn 

\end{enumerate}
\end{hyp}

We also have the following assumption on the geometry at infinity. 

\begin{hyp}\label{hyp35}
There exists $\xi_W$ in the center $Z({\mf k})\subset {\mf k}$, and a continuous function $\tau \mapsto c_W (\tau)$ (for $\tau \in Z({\mf k})$) satisfying the following condition. If we define ${\mc F}_W:= \mu\cdot \xi_W$, then ${\mc F}_W|_{\ov{({\rm Crit} W)^{\rm ss}}}: \ov{({\rm Crit} W)^{\rm ss}} \to {\mb R}$ is proper, bounded from below, and
\beq\label{eqn31}
\begin{array}{c} x \in \ov{({\rm Crit} W)^{\rm ss}}\\
\xi \in T_x V\\
  {\mc F}_W(x) \geq c_W(\tau)
\end{array} \Longrightarrow \left\{ \begin{array}{c} \langle \nabla_\xi \nabla {\mc F}_W(x), \xi\rangle + \langle \nabla_{J \xi} \nabla {\mc F}_W(x), J \xi \rangle \geq 0, \\
\langle \nabla {\mc F}_W(x), J {\mc X}_{\mu(x) - \tau}(x) \rangle \geq 0. \end{array}\right. 
\eeq
\end{hyp}

\begin{rem}
We explain why we need these assumptions.
\begin{enumerate}
    \item Hypothesis 3.4 (a) seems to be the most mysterious assumption. It says that inside the semi-stable locus the R-symmetry can be merged into the gauge symmetry. This condition is satisfied by many interesting examples such as the most classical example of projective hypersurfaces recalled in the next subsection. We also notice that the map $\iota_W$ depends on the stability condition. This means that if we change the stability condition, then we shall require Condition 3.4 (a) with a different group homomorphism $\iota_W$. On the other hand, this hypothesis implies that the solution to the gauged Witten equation are actually symplectic vortices (see Remark \ref{rem46}), which is a very useful fact. 

    \item Hypothesis 3.4 (b) is assumed in order to guarantee that the degree of solutions (see the definition of the degree after Theorem \ref{thm42}) has integer coefficients. Otherwise the degree of solutions may have fractional coefficients and certain arguments, such as the index formula, need to be modified accordingly.

    \item Hypothesis 3.4 (c) implies that the classical vacuum is a manifold. We could allow the action to be only locally free, resulting in an orbifold classical vacuum. However this will significantly complicate the construction.

    \item Hypothesis 3.4 (d), together with Hypothesis 3.5, are needed for proving the $C^0$-compactness of solutions (see the proof of Theorem \ref{thm43}). Essentially we need certain maximal principle to guarantee that solutions won't escape at infinity. Her   e Hypothesis \ref{hyp35} is similar to an assumption in \cite{Cieliebak_Gaio_Salamon_2000}\cite{Cieliebak_Gaio_Mundet_Salamon_2002}.

    \item Hypothesis 3.4 (e) is assumed to simplify the technicality. If we remove this assumption, then the compactification of the moduli spaces will require considering sphere bubbles. That could be useful, for example, when the target space $V$ is some vector bundle over a compact manifold.
    \end{enumerate}
\end{rem}

Hypothesis \ref{hyp34} above allow us to consider the symplectic reductions
\begin{align*}
&\ V \qu G:= \mu^{-1}(0)/K \cong V^{\rm ss}/G, &\ X:= ({\rm Crit} W \cap \mu^{-1}(0))/K \cong ( {\rm Crit} W)^{\rm ss} /G.
\end{align*}
Our hypothesis implies that $V \qu G$ is a K\"ahler manifold and $X \subset V \qu G$ is a closed submanifold. The manifold $X$ is called the {\bf classical vacuum} of $(V, G, W, \mu)$. The third item of Hypothesis \ref{hyp34} implies that $X$ is compact.

\subsection{Example: projective hypersurfaces}\label{subsection32}

Consider a degree $r\geq 2$ homogeneous polynomial $Q: {\mb C}^N \to {\mb C}$. It is called nondegenerate if $0\in {\mb C}^N$ is the only critical point of $Q$, implying that the hypersurface $X_Q \subset {\mb P}^{N-1}$ defined by $Q$ is a smooth manifold.

To realize the target $X_Q$ from a GLSM space, following Witten \cite{Witten_LGCY}, consider $V = {\mb C}^{N+1}$ with coordinates $(p, x_1, \ldots, x_N)$. The R-symmetry is the standard ${\mb C}^*$-action on the $x$-variables. Introduce $W: {\mb C}^{N+1}\to {\mb C}$ which is defined as
\beqn
W(p, x_1, \ldots, x_N) = p Q(x_1, \ldots, x_N). 
\eeqn
Let $G$ be ${\mb C}^*$ and act on $X$ by  
\beqn
g \cdot (p, x_1, \ldots, x_N) = ( g^{-r} p, g  x_1, \ldots, g x_N).
\eeqn
It is Hamiltonian and a moment map is 
\beqn
\mu (p, x_1, \ldots, x_N) = - \frac{\i}{2} \Big[ |x_1|^2 + \cdots + |x_N|^2 - r |p|^2 - \tau \Big]
\eeqn
where $\tau$ is a constant, playing the role as a parameter of this theory. 

One can check that when $\tau >0$, the quadruple $(V, G, W, \mu)$ is a GLSM space in geometric phase satisfying all of our technical assumptions. Indeed, the critical locus of $W$ decomposes as 
\beqn
{\rm Crit} W = \Big\{ (0, x_1, \ldots, x_N)\ |\ Q(x_1, \ldots, x_N) = 0 \Big\} \cup \Big\{ (p, 0, \ldots, 0) \ |\ p \in {\mb C} \Big\}.
\eeqn
We see that the first component above contains the smooth locus, and when $\tau>0$, $\mu^{-1}(0)$ is disjoint from the second component above. Hence 
\beqn
\ov{({\rm Crit} W)^{\rm ss}} = \Big\{ (0, x_1, \ldots, x_N)\ |\ Q(x_1, \ldots x_N) = 0\Big\} \cong Q^{-1}(0) \subset {\mb C}^N.
\eeqn
and the classical vacuum $X$ is the projective hypersurface defined by $Q$. Hypothesis \ref{hyp34} and Hypothesis \ref{hyp35} can be checked in a straightforward manner. 

We remark that when $\tau<0$, we are in the Landau--Ginzburg phase. Witten's argument for LG/CY correspondence \cite{Witten_LGCY} is based on this observation.

\subsection{Gauged maps and vortices}\label{subsection33}

We review the basic notions about gauged maps from surfaces to a manifold with group action. Let $K$ be a compact Lie group. Let $\Sigma$ be a surface and $P \to K$ be a smooth principal $K$-bundle. Let $M$ be a smooth manifold acted by $K$. Let $\pi: P(M) \to \Sigma$ be the associated fibre bundle with fibre $M$. The vertical tangent bundle is then  
\beqn
P(TM) \to P(M).
\eeqn
Any connection $A \in {\mc A}(P)$ induces a splitting of
\beqn
\xymatrix{ 0 \ar[r]  & P(TM) \ar[r]  & TP(M) \ar[r] & \pi^* T\Sigma \ar[r] & 0 }.
\eeqn

A {\it gauged map} from $\Sigma$ to $M$ is a triple ${\bm v} = (P, A, u)$ where $P \to \Sigma$ is a principal $K$-bundle, $A \in {\mc A}(P)$ is a connection and $u$ is a section of $P(M)$. When $P$ is fixed in the context, we also call ${\bm v} = (A, u)$ a gauged map. If $u$ is (weakly) differentiable, one has the ordinary derivative 
\beqn
du \in \Gamma(\Sigma, \Lambda^1 \otimes u^* TP(M)). 
\eeqn
Using the projection $TP(M) \to P(TM)$ induced by $A$, one has the {\it covariant derivative} 
\beqn
d_A u  \in \Gamma( \Sigma, \Lambda^1 \otimes u^* P(TM)) = \Gamma( \Sigma, {\rm Hom}( T\Sigma, u^* P(TM)). 
\eeqn

If $M$ has a $K$-invariant almost complex structure $J_M$, then it induces a complex structure on $P(TM)$. Meanwhile if $\Sigma$ has a complex structure, then one can take the $(0, 1)$-part of the covariant derivative, which is denoted by
\beqn
\ov\partial_A u  \in \Omega^{0,1} ( u^* P(TM)) \cong \Gamma(\Sigma, {\rm Hom}^{0,1}(T\Sigma, u^* P(TM)).
\eeqn

Associated to the adjoint representation there is the adjoint bundle ${\rm ad} P = P({\mf k})$ of $P$. The space connections ${\mc A}(P)$ is an affine space modelled on the linear space $\Omega^1(\Sigma, {\rm ad}P)$. The curvature of $A$ is a 2-form 
\beqn
F_A \in \Omega^2 (\Sigma, {\rm ad} P). 
\eeqn

A smooth {\it gauge transformation} on $P$ is a smooth map $g: P \to K$ satisfying 
\beqn
g(p h) = h^{-1} g(p) h,\ \forall h \in K. 
\eeqn
Here $ph$ denotes the canonical right $K$-action on the principal bundle $P$. Gauge transformations naturally form an infinite-dimensional Lie group ${\mc G}(P)$, using the multiplication of $G$ and they can be viewed as automorphisms of $P$ (i.e., smooth fibre bundle automorphisms that respect the right $K$-action). It acts on connections of $P$ and sections of $P(M)$ by reparametrization. Hence we regard gauge transformations as right actions by ${\mc G}(P)$, denoted by 
\beqn
g^* (A, u) = (g^* A, g^* u).
\eeqn
The covariant derivatives and curvatures transform naturally with respect to gauge transformations. Namely, for any $A \in {\mc A}(P)$, $u \in \Gamma(P(M))$, and $g \in {\mc G}(P)$, 
\begin{align*}
&\ d_{g^* A} g^* u = g^{-1} d_A u,\ &\ F_{g^* A} = {\rm Ad}_g^{-1} F_A. 
\end{align*}

Now assume that $M$ has a symplectic form $\omega_M$ and the $K$-action is Hamiltonian with moment map $\mu: M \to {\mf k}^*$. Then $\mu$ induces an element 
\beqn
\mu_{P(M)} \in \Gamma ( P(M), {\rm ad} P^*)
\eeqn
where the (non-standard) notation means smooth maps between fibre bundles over $\Sigma$. Then if $u$ is a section of $P(M)\to \Sigma$, one can define the composition 
\beqn
\mu(u) \in \Gamma(\Sigma, {\rm ad} P^*).
\eeqn

Now we can review the basic set up of the symplectic vortex equation. This is an equation introduced first by Mundet \cite{Mundet_thesis, Mundet_2003} and Cieliebak--Gaio--Salamon \cite{Cieliebak_Gaio_Salamon_2000}. Assume that the $K$-action is Hamiltonian with moment map $\mu: M \to {\mf k}^* $ and there is a $K$-invariant almost complex structure $J_M$. Choose an adjoint-invariant metric on ${\mf k}$ so that we can identify ${\mf k} \cong {\mf k}^*$. Choose an area form on $\Sigma$ so we can identify a zero-form with a two-form on $\Sigma$ by the Hodge star operator $*$. A {\it vortex} is a gauged map $(P, A ,u)$ from $\Sigma$ to $M$ satisfying the following equation 
\begin{align}\label{eqn32}
&\ \ov\partial_A u = 0,\ &\ * F_A + \mu(u) = 0.
\end{align}
Here the second equation is viewed as an equality in the space $\Gamma(\Sigma, {\rm ad} P)$. 

The vortex equation can be viewed as the {\it equation of motion} with respect to the above energy functional. To see we need to have a short review of equivariant topology. The equivariant (co)homology of $M$ is defined to be the usual (co)homology of the Borel construction 
\beqn
M_K = EK \times_K M,
\eeqn
where $EK \to BK$ is the universal bundle over the classifying space $BK$. Then for any gauged map $(P, A, u)$ over $\Sigma$, $P$ induces a homotopy class of maps of $K$-bundles
\beqn
\vcenter{ \xymatrix{ P \ar[r] \ar[d] & EK \ar[d] \\
           \Sigma   \ar[r] & BK } }.
\eeqn
The section $u$ is equivalent to an equivariant map $\Phi_u: P \to M$. Hence there is a well-defined homotopy class of equivariant maps $P \to EK \times M$. It descends to a map $\Sigma \to M_K$. If $\Sigma$ is closed, the section gives a homology class $B \in H_2^K(M; {\mb Z})$. 

For a solution $(P, A, u)$ to the vortex equation \eqref{eqn32} over a compact surface $\Sigma$, suppose $(P, u)$ represents a class $B \in H_2^K(M; {\mb Z})$, then (see for example \cite{Cieliebak_Gaio_Mundet_Salamon_2002})
\beqn
E(P, A, u):= \frac{1}{2} \left( \| d_A u \|_{L^2}^2 + \| F_A \|_{L^2}^2 + \| \mu(u)\|_{L^2}^2 \right) = \langle [\omega_M^K], B\rangle.
\eeqn
Here $[\omega_M^K] \in H^2_K(M)$ is the equivariant cohomology class represented by the equivariant form $\omega_M + \mu$.

\subsection{Gauged Witten equation}\label{subsection34}

Let ${\mc C}= (\Sigma_{\mc C}, L_{\mc C}, \varphi_{\mc C}, \vec{\bf z}_{\mc C})$ be an $r$-spin curve and let $\Sigma_{\mc C}^* \subset \Sigma_{\mc C}$ be the complement of the markings and nodes, which is a smooth open Riemann surface. %\rf{We often drop the subindices ${\mc C}$ from the notations when it is clear from the context.} 
Recall that in Subsection \ref{subsection23} we have chosen a Hermitian metric on $L_{\mc C}$ which only depends on the isomorphism class of ${\mc C}$. Then $P_{\mc C} \to \Sigma_{\mc C}^*$ is the associated $U(1)$-bundle and $A_{\mc C} \in {\mc A}(P_{\mc C})$ is the Chern connection. If $P \to \Sigma_{\mc C}^*$ is a principal $K$-bundle, then we denote by
\beqn
\ubar P_{\mc C} \to \Sigma_{\mc C}^*
\eeqn
the principal bundle with structure group $K\times U(1)$ obtained from $P$ and $P_{\mc C}$. When ${\mc C}$ is fixed from the context, we will abbreviate $\ubar P_{\mc C}$ by $\ubar P$. Since the $K$-action and the $U(1)$-action on $V$ commute, one has the associated bundle 
\beqn
\ubar P (V):= (\ubar P \times V)/ K \times U(1).
\eeqn
Moreover, together with $A_{\mc C}$, any connection $A$ on $P$ induces a connection $\ubar A$ on $\ubar P$. 

\begin{defn}
A {\it gauged map} from ${\mc C}$ to $V$ is a triple ${\bm v} = (P, A, u)$ where $P \to \Sigma_{\mc C}^*$ is a $K$-bundle, $A \in {\mc A}(P)$ is a connection, and $u \in \Gamma ( \ubar P(V))$ is a section.
\end{defn}

Fix $P \to \Sigma_{\mc C}^*$. We can lift the superpotential $W : V \to {\mb C}$ to a section 
\beqn
{\mc W}_{\mc C} \in \Gamma( \ubar P(V), \pi^* K_{{\mc C}, {\rm log}}) \cong \Gamma( \ubar P (V), \pi^* K_{\Sigma_{\mc C}^*})
\eeqn
as follows. A point of $\ubar P(V)$ is represented by a triple $[p_{\mc C}, p, x]$ where $p_{\mc C } \in P_{\mc C}$, $p \in P$, and $x \in V$, with the equivalence relation defined by
\beqn
[p_{\mc C} h_{\mc C}, p h, x ] = [ p_{\mc C}, p, h_{\mc C} h x ],\ \forall h_{\mc C} \in U(1), h \in K.
\eeqn
Then define 
\beqn
{\mc W}_{\mc C} ([p_{\mc C}, p, x]) = W(x) \varphi_{\mc C} ( (p_{\mc C})^r).
\eeqn
By the homogeneity and $K$-invariance of $W$, ${\mc W}_{\mc C}$ is a well-defined section. 

Since the K\"ahler metric is $K \times U(1)$-invariant, it induces a Hermitian metric on the vertical tangent bundle $\ubar P (TV) \to \ubar P (V)$. Therefore, one can dualize the differential $d{\mc W}_{\mc C} \in \Gamma( \ubar P(V), \pi^*K_{\Sigma_{\mc C}^*} \otimes \ubar P (TV)^*)$, obtaining the gradient
\beqn
\nabla {\mc W}_{\mc C} \in \Gamma( \ubar P(V), \pi^* \ov{ K_{\Sigma_{\mc C}^*}} \otimes \ubar P (TV)).
\eeqn
On the other hand, for each $A \in {\mc A}(P)$ and $u \in \Gamma( \ubar P (V))$, one can take the covariant derivative of $u$ with respect to $\ubar A$, and take its $(0,1)$ part. We regard this as an operation on the pair $(A, u)$ and hence denote the covariant derivative by $d_A u$ and its $(0,1)$ part as 
\beqn
\ov\partial_A u \in \Gamma( \Sigma_{\mc C}^*, \ov{K_{\Sigma_{\mc C}^*} } \otimes u^* \ubar P(TV))=: \Omega^{0,1}(\Sigma_{\mc C}^*, u^* \ubar P  (TV) ).
\eeqn
We have the {\it Witten equation}
\beq\label{eqn37}
\ov\partial_A u + \nabla {\mc W}_{\mc C} (u) = 0.
\eeq

We need to fixed the complex gauge by imposing a curvature condition. Let ${\rm ad} P$ be the adjoint bundle. Then the curvature form $F_A$ of $A$ is 2-form with coefficients in ${\rm ad} P $. Further, since the R-symmetry commutes with the $K$-action, for any $u \in \Gamma( \ubar P (V))$, $\mu (u)$ is a well-defined section of ${\rm ad} P $. The {\it vortex equation} is 
\beq\label{eqn38}
* F_A + \mu (u) = 0.
\eeq
The {\bf gauged Witten equation} over ${\mc C}$ is the system on $(P, A, u)$
\begin{align}\label{eqn39}
&\ \ov\partial_A u + \nabla {\mc W}_{\mc C}(u) = 0,\ &\ * F_A + \mu (u) = 0.
\end{align}

Basic local properties of solutions to the gauged Witten equation still hold as for vortices. For example, there is a gauge symmetry of \eqref{eqn39} for gauge transformations $g: P \to K$. Moreover, for any weak solution, there exists a gauge transformation making it a smooth solution.

Fix the smooth $r$-spin curve ${\mc C}$. For a gauged map ${\bm v} = (P, A, u)$ from ${\mc C}$ to $X$, define its {\it energy} as
\beqn
E(P, A, u) =  \frac{1}{2} \Big( \| d_A u \|_{L^2}^2 + \| \mu  (u) \|_{L^2}^2 + \| F_A \|_{L^2}^2 \Big) + \| \nabla {\mc W}_{\mc C} (u)\|_{L^2}^2.
\eeqn

\begin{defn}
Let ${\mc C}$ be a smooth $r$-spin curve. We say that a gauged map $(P, A, u)$ from ${\mc C}$ to $V$ is {\it bounded} if it has finite energy and if there is a $\ubar K$-invariant compact subset $Z \subset V$ such that 
\beqn
u(\Sigma_{\mc C}^* ) \subset \ubar P (Z) \subset \ubar P (V). 
\eeqn
\end{defn}

\section{Analytical Properties of Solutions}\label{section4}

In this section we collect several properties of solutions to the gauged Witten equation over a smooth $r$-spin curve. Some of the results are proved in the companion paper \cite{Tian_Xu_geometric_2}. A consequence of these results is that solutions represents certain equivariant homology classes in $V$ in integer coefficients. We will also prove a uniform $C^0$-bound of solutions, a prerequisite of compactifying the moduli space. We remind the reader that our technical assumptions (Hypothesis \ref{hyp34} and Hypothesis \ref{hyp35}) are important for deriving these properties.

We first have the following result showing that solutions are contained in the critical locus and hence are all holomorphic. It is proved in \cite{Tian_Xu_geometric_2}.

\begin{thm}\label{thm41}\cite{Tian_Xu_geometric_2}
Let ${\bm v} = (P, A, u)$ be a bounded solution to \eqref{eqn39} over a smooth $r$-spin curve ${\mc C}$.  Then
\beqn
\ov\partial_A u \equiv \nabla {\mc W}_{\mc C} (u) \equiv 0.
\eeqn
\end{thm}

Another expected result is about the asymptotic behavior of solutions over the cylindrical ends. Its proof is also deferred to \cite{Tian_Xu_geometric_2}. We first introduce a few notations. Given $m \in \{0, 1, \ldots, r-1\}$ which can label the monodromy of $r$-spin structures, denote 
\beqn
\lambda_m:= - \iota_W( \frac{{\bf i} m}{r} ) \in {\mf k}. 
\eeqn

\begin{thm}\label{thm42}
Let ${\bm v} = (P, A, u)$ be a bounded solution of the gauged Witten equation over a smooth $r$-spin curve ${\mc C}$. Suppose the monodromy of the $r$-spin structure at the puncture $z_a$ is $m_a \in \{0, 1, \ldots, r-1\}$. Then there exist $x_a \in {\rm Crit} W \cap \mu^{-1}(0)$ and a smooth trivialization of $P$ near $z_a$ such that, with respect to this trivialization, we can write $A = d + \phi ds + \psi dt$ and regard $u$ as a map $u_a: U_a \to V$, satisfying 
\beq\label{eqn41}
\lim_{s \to +\infty} \phi = 0,\ \lim_{s \to +\infty} \psi = \lambda_{m_a},\ \lim_{s \to +\infty} u(s, t) = x_a. 
\eeq
\end{thm}

Theorem \ref{thm41} and \ref{thm42} imply that if ${\mc C}$ has at least one marking, then a solution is contained in the closure of the semi-stable locus $\ov{({\rm Crit}W)^{\rm ss}}$. A further consequence is that a bounded solution represents a $\ubar K$-equivariant homology class in $V$ (usually called the degree). The degree of the solution is defined as follows. The asymptotic behavior of solutions shown in Theorem \ref{thm42} specifies particular trivializations of the $K$-bundle $P\to \Sigma_{\mc C}^*$ over cylindrical ends. Together with the trivialization of $P_{\mc C}$ over the cylindrical ends the bundle $\ubar P$ extends to a $K \times U(1)$-bundle over the underlying smooth (not orbifold) surface, which is still denoted by $\ubar P$. Theorem \ref{thm42} and item (b) of Hypothesis \ref{hyp34} imply that the section $u$ extends to a continuous section of $\ubar P (V)$. Hence by the discussion at the end of Subsection \ref{subsection33}, there is an integer class 
\beqn
\ubar B\in H_2^{\ubar K} (V; {\mb Z})
\eeqn
associated to the homotopy class of the pair $(\ubar P, u)$. This class is defined to be the degree of the solution. The class $\ubar B$ can be paired with the equivariant symplectic class $[\omega_V^{\ubar K}]$ which will appear in the energy computation; it can also be paired with the equivariant first Chern class $c_1^{\ubar K}(TV)$ which will appear in the dimension axiom of the virtual cycle. 

\subsection{Uniform $C^0$ bound}

The next major result is the uniform $C^0$ bound on solutions. This is a crucial step towards the compactness of moduli spaces of the gauged Witten equation.

\begin{thm}\label{thm43}
For each positive number $E>0$, there exists a $\ubar K$-invariant compact subset $N = N_E \subset V$. Given a smooth $r$-spin curve ${\mc C} = (\Sigma_{\mc C}, \vec{\bf z}_{\mc C}, L_{\mc C}, \varphi_{\mc C})$ and a solution ${\bm v} = (P, A, u)$ to the gauged Witten equation \eqref{eqn39} with $E({\bm v}) \leq E$ and image contained in $\ubar P(\ov{({\rm Crit} W)^{\rm ss}})$. Then ${\rm Im} (u) \subset \ubar P (N)$. Moreover, if ${\mc C}$ has at least one marked point, then $N_E$ can be chosen to be independent of $E$. 
\end{thm}

\begin{proof}
Consider the function ${\mc F}_W = \mu \cdot \xi_W$ given by Hypothesis \ref{hyp35}. Choose local coordinate $z = s + {\bf i} t$ on $\Sigma_{\mc C}^*$. Let $\Delta$ be the standard Laplacian in this coordinate. Let the volume form be $\sigma ds dt$ and let the curvature form of $A_{\mc C}$ be $F_{A_{\mc C}} = \kappa_{\mc C} ds dt$. We also choose local trivializations of $P_{\mc C}$ and $P$ so that the combined connection can be written as $d + \ubar \phi ds + \ubar \psi dt$ and the section is identified with a map $u$ into $V$. 

The covariant derivatives of a tangent vector field $\xi$ along $u$ are
\begin{align*}
&\ D_{\ubar A, s} \xi = \nabla_s \xi + \nabla_\xi {\mc X}_{\ubar \phi},\ &\ D_{\ubar A, t} \xi = \nabla_t \xi + \nabla_\xi {\mc X}_{\ubar \psi}.
\end{align*} 
Then by the vortex equation $*F_A + \mu(u) = 0$, one has	
\beqn
\begin{split}
D_{\ubar A, s} {\bm v}_t - D_{\ubar A, t} {\bm v}_s = &\ \nabla_s (\partial_t u + {\mc X}_{\ubar \psi} ) + \nabla_{\partial_t u + {\mc X}_{\ubar \psi} } {\mc X}_{\ubar \phi}  - \nabla_t (\partial_s u + {\mc X}_{\ubar \phi} ) - \nabla_{\partial_s u + {\mc X}_{\ubar \phi} } {\mc X}_{\ubar \psi} \\
                                = &\ \nabla_{\partial_s u} {\mc X}_{ \ubar \psi} + {\mc X}_{\partial_s \ubar \psi} + \nabla_{\partial_t u + {\mc X}_{\ubar \psi} } {\mc X}_{\ubar \phi } - \nabla_{\partial_t u} {\mc X}_{ \ubar \phi} - {\mc X}_{\partial_t \ubar \phi} - \nabla_{\partial_s u + {\mc X}_{\ubar \phi} } {\mc X}_{\ubar \psi} \\
																= &\ {\mc X}_{\partial_s \ubar \psi} - {\mc X}_{\partial_t \ubar \phi } + [{\mc X}_{\ubar \psi} , {\mc X}_{\ubar \phi} ]\\
																= &\ {\mc X}_{\kappa_{\mc C}} -  \sigma {\mc X}_{\mu}.
\end{split}
\eeqn
Moreover, since $u$ is holomorphic with respect to $\ubar A$, ${\mc F}_W$ is $\ubar K$-invariant, and $D_{\ubar A, s}, D_{\ubar A, t}$ are metric-preserving, one has
\beqn
\begin{split}
&\ \Delta {\mc F}_W (u) \\
 = &\ \partial_s \langle \nabla {\mc F}_W (u), \partial_s u \rangle + \partial_t \langle \nabla {\mc F}_W (u), \partial_t u \rangle\\
                     = &\ \partial_s \langle \nabla {\mc F}_W (u), {\bm v}_s \rangle + \partial_t \langle \nabla {\mc F}_W (u), {\bm v}_t \rangle\\
                     = &\ \langle D_{\ubar A, s} \nabla {\mc F}_W(u), {\bm v}_s \rangle + \langle \nabla{\mc F}_W(u), D_{\ubar A, s}  {\bm v}_s \rangle + \langle D_{\ubar A, t} \nabla {\mc F}_W(u), {\bm v}_t \rangle + \langle \nabla {\mc F}_W(u), D_{\ubar A, t} {\bm v}_t \rangle \\
		            = &\ \langle \nabla_{{\bm v}_s} \nabla {\mc F}_W, {\bm v}_s \rangle + \langle \nabla_{{\bm v}_t} \nabla {\mc F}_W, {\bm v}_t \rangle + \langle \nabla {\mc F}_W, - J D_{\ubar A, s} {\bm v}_t + J D_{\ubar A, t} {\bm v}_s \rangle \\
		            = &\ \langle \nabla_{{\bm v}_s} \nabla {\mc F}_W, {\bm v}_s \rangle + \langle \nabla_{{\bm v}_t} \nabla {\mc F}_W, {\bm v}_t \rangle +  \langle \nabla {\mc F}_W, -J {\mc X}_{\kappa_{\mc C}} +  \sigma J {\mc X}_{\mu}\rangle.
										\end{split}
\eeqn
Since $u$ is contained in $\ov{({\rm Crit} W)^{\rm ss}}$, by Hypothesis \ref{hyp34}, one has
\beqn
\Delta {\mc F}_W(u)  = \langle \nabla_{{\bm v}_s} \nabla {\mc F}_W, {\bm v}_s \rangle + \langle \nabla_{{\bm v}_t} \nabla {\mc F}_W, {\bm v}_t \rangle + \langle \nabla {\mc F}_W, - J {\mc X}_{\iota_W(\kappa_{\mc C})} +  \sigma {\mc X}_{\mu} \rangle.
\eeqn
By Lemma \ref{lemma211}, which says that $F_{A_{\mc C}} = \kappa_{\mc C} ds dt$ is uniformly bounded, one can cover $\Sigma_{\mc C}^*$ by coordinate charts such that for certain $\tau_R>0$ independent of ${\mc C}$ and ${\bm v}$, one always have 
\beqn
|\kappa_{\mc C} | \leq \tau_R.
\eeqn
Notice that the image of $\iota_W$ in $Z({\mf k})$ is one-dimensional. By Hypothesis \ref{hyp35} and \eqref{eqn31}, whenever ${\mc F}_W(u) \geq \max \big\{ c_W( \tau_R {\bf i} ), c_W(-\tau_R {\bf i}) \big\}$, one has
\beqn
\Delta {\mc F}_W(u) \geq 0.
\eeqn
Therefore, ${\mc F}_W(u)$ is subharmonic whenever it is greater than $\max \big\{ c_W( \tau_R {\bf i}), c_W(-\tau_R {\bf i}) \big\}$. Assume that
\beqn
C:= \sup_{{\Sigma_{\mc C}^*}} {\mc F}_W(u) > \max \big\{ c_W( \tau_R {\bf i}), c_W(-\tau_R {\bf i}) \big\}.
\eeqn
If ${\mc C}$ has at least one puncture, then the subharmonicity and the asymptotic behavior of solutions imply that 
\beqn
C \leq \sup_{x \in {\rm Crit} W\cap \mu^{-1}(0)} {\mc F}_W(x).
\eeqn
If ${\mc C}$ has no puncture, then ${\mc F}_W(u)\equiv C$ over $\Sigma_{\mc C}^* = \Sigma_{\mc C}$, whose area is finite. By the vortex equation, one has
\beqn
0 = * F_{A} \cdot \xi_W + \mu(u)\cdot \xi_W.
\eeqn
Integrate this equality over $\Sigma_{\mc C}$, one obtains
\beqn
C {\rm Area} \Sigma_{\mc C} \leq   |\xi_W| \int_{\Sigma_{\mc C}} |F_A| \sigma ds dt
 \leq  |\xi_W| \sqrt{ {\rm Area} \Sigma_{\mc C}} \| F_A \|_{L^2(\Sigma_{\mc C})} \leq  |\xi_W| \sqrt{{\rm Area} \Sigma_{\mc C}} \sqrt{E}.
 \eeqn
Since ${\rm Area} \Sigma_{\mc C}$ has a lower bound, the number $C$ cannot exceed a constant depending on $E$. Therefore, whether ${\mc C}$ has punctures or not, $|{\mc F}_W(u)|$ has an upper bound. By the properness of ${\mc F}_W|_{\ov{({\rm Crit} W)^{\rm ss}}}$, we have proved Theorem \ref{thm43}.
\end{proof}

\subsection{Energy bound}

We prove that solutions to the gauged Witten equation has an {\it a priori} bound on their energy.

\begin{thm}\label{energybound}
For each equivariant curve class $\ubar B \in H_2^{\ubar K}(V; {\mb Z})$ there is a positive constant $E(\ubar B) > 0$ satisfying the following condition. Suppose ${\bm v} = (P, A, u)$ is a bounded solution to the gauged Witten equation over a smooth $r$-spin curve ${\mc C}$ which represents the class $\ubar B$ and whose image is contained in $\ubar P(\ov{({\rm Crit} W)^{\rm ss}})$. Then 
\beqn
E(P, A, u) \leq E(\ubar B).
\eeqn
\end{thm}

\begin{proof}
Choose a local coordinate chart $U\subset \Sigma_{\mc C}^*$ with coordinates $z  = s + \i t$. Choose a local trivialization of $P$, so that $u$ is identified as a map $u(s, t)$ and $\ubar A$ is identified with $d + \ubar \phi ds + \ubar \psi dt$. There is a closed 2-form on $U \times V$ which transforms naturally with change of local trivializations. Define
\beqn
\omega_{\ubar A} = \omega_V - d \langle \ubar \mu ,  \ubar \phi ds + \ubar \psi dt \rangle \in \Omega^2(U\times V)
\eeqn
where $\ubar \mu = (\mu_R, \mu): V \to \ubar {\mf k} \cong {\bf i} {\mb R} \oplus {\mf k}$ is the moment map for the $\ubar K$-action. It is easy to see that $\omega_{\ubar A}$ is a well-defined closed form on the total space $\ubar P(V)$. Suppose locally the volume form is $\sigma ds dt$. Since $u$ is holomorphic, one has
\beqn
\begin{split}
\frac{1}{2} |d_A u|^2  = &\ \frac{1}{2\sigma} \Big[ |\partial_s u + {\mc X}_{\ubar \phi} (u)|^2 + |\partial_t u + {\mc X}_{\ubar \psi} (u)|^2 \Big]\\
                       = &\ \frac{1}{\sigma}  \omega_V \Big( \partial_s u + {\mc X}_{\ubar \phi} (u), J (\partial_s u + {\mc X}_{\ubar \phi} (u)) \Big)\\                       = &\ * \Big[ u^* \omega_V - d (\mu \cdot ( \ubar \phi ds + \ubar \psi dt)) +  \ubar \mu (u) \cdot (\partial_s \ubar \psi - \partial_t \ubar \phi + [\ubar \phi, \ubar \psi ] ) \Big]\\
											= &\ * \Big[ u^* \omega_{\ubar A} + \ubar \mu (u) \cdot F_{\ubar A} \Big].
\end{split}
\eeqn
Therefore, by definition, the holomorphicity of $u$, and the vortex equation, one has
\beq\label{eqn42}
\begin{split}
E(P, A, u)  = &\ \frac{1}{2} \Big[ \| d_A u\|_{L^2}^2 + \| F_A \|_{L^2}^2 + \| \mu (u)\|_{L^2}^2 \Big]\\
         = &\ \int_{\Sigma_{\mc C}^*} u^* \omega_{\ubar A} + \int_{\Sigma_{\mc C}^*}  \ubar \mu (u) \cdot F_{\ubar A} + \frac{1}{2} \| F_A  \|_{L^2}^2 + \frac{1}{2} \|\mu (u)\|_{L^2}^2 \\
         = &\ \int_{\Sigma_{\mc C}^*} u^* \omega_{\ubar A} + \int_{\Sigma_{\mc C}^*} \mu_R (u) \cdot F_{A_{\mc C}} + \frac{1}{2} \| * F_A + \mu(u) \|_{L^2}^2 \\
         = &\ \int_{\Sigma_{\mc C}^*} u^* \omega_{\ubar A} +  \int_{\Sigma_{\mc C}^*} \mu_R (u) \cdot F_{A_{\mc C}}.
         \end{split}
         \eeq
It is a well-known fact that the integral of $u^* \omega_{\ubar A}$ over $\Sigma_{\mc C}$ is equal to the topological pairing $\langle [\omega_V^{\ubar K}], \ubar B \rangle$. On the other hand, by Lemma \ref{lemma211} and Theorem \ref{thm43}, the last term of \eqref{eqn42} is bounded by a constant $C>0$. Hence this theorem holds for $E(\ubar B) = \langle [\omega_V^{\ubar K}, \ubar B \rangle + C$.  
\end{proof}

When the domain has no markings, it is not obvious to us how to exclude solutions that are contained in the unstable locus of the critical point set. However we show that when the domain area is sufficiently large, solutions which are contained in $\ov{({\rm Crit} W)^{\rm ss}}$ are separated from solutions contained in the unstable locus.

\begin{cor}
Given $\ubar B \in H_2^{\ubar K}(V; {\mb Z})$ and $g \geq 2$ there exists $C(\ubar B)>0$ satisfying the following condition. Suppose ${\mc C}$ is an $r$-spin curve with no markings and $(P, A,u)$ is a solution to \eqref{eqn39}. Suppose $u( \Sigma_{\mc C}) \subset \ubar P ( \ov{ ({\rm Crit} W)^{\rm ss}})$ and ${\rm Area}\Sigma_{\mc C} \geq C(\ubar B)$, then 
\beqn
u(\Sigma_{\mc C}) \cap \ubar P ( ({\rm Crit} W)^{\rm ss}) \neq \emptyset.
\eeqn
\end{cor}

\begin{proof}
Assume $(P, A, u)$ satisfies the hypothesis. Theorem \ref{energybound} shows that 
\beqn
E(P, A, u) \leq E(\ubar B).
\eeqn
By definition, one has
\beqn
E(\ubar B) \geq E(P, A, u) \geq \int_{{\mc C}} |\mu(u)|^2.
\eeqn
Then if the area of ${\mc C}$ is sufficiently large, the minimal value of $|\mu(u)|$ can be arbitrarily small. Therefore, the image of $u$ must intersect a neighborhood of $\mu^{-1}(0)$. The corollary then follows. 
\end{proof}

\begin{rem}\label{rem46}
When the number of markings $n$ is zero, we will assume that the minimal area of a genus $g$ curve with respect to our chosen metric is sufficiently large. From now on we will always assume that solutions to the gauged Witten equation are contained in $\ov{({\rm Crit} W)^{\rm ss}}$. Then using the morphism $\iota_W: {\mb C}^* \to G$ assumed in Hypothesis \ref{hyp34}, one can convert a solution $(P, A, u)$ to the gauged Witten equation to a solution $(P', A', u')$ to the equation 
\beq\label{eqn43}
\left\{ \begin{array}{rcl}
\ov\partial_{A'} u'  &= & 0,\\
* F_{A'} + \mu(u') & = & * F_{\iota_W(A_{\mc C})}.  
\end{array} \right.
\eeq
This equation can be viewed as a perturbation of the symplectic vortex equation \eqref{eqn32} for the gauge group $K$ and the perturbation term $* F_{\iota_W(A_{\mc C})}$ decays exponentially over the cylindrical ends. Therefore results about vortices can be applied to the gauged Witten equation with very little modification.
\end{rem}

\section{A Compactification of the Moduli Space}\label{section5}

In the last section we considered solutions of the gauged Witten equation over a smooth $r$-spin curve. We have see that solutions are contained in the critical locus of $W$ and are holomorphic. Hence in particular they are all solutions to the symplectic vortex equation with target $V$. So the topology on their moduli space can be defined as the same as if it is a subset of the moduli space of vortices.

\subsection{Solitons}

Since we assume that $V$ is aspherical, sphere bubbling cannot happen. However, solutions can still bubble off solitons at punctures or nodes. The solitons are bounded solutions to the gauged Witten equation over the infinite cylinder. Notice that over the infinite cylinder, the log-canonical bundle is trivialized by $ds + {\bf i} dt$ where $s + {\bf i} t$ is the standard cylindrical coordinate. Then the set of isomorphism classes of $r$-spin structures over $\Thetait$ is a group isomorphic to ${\mb Z}_r$. By our convention (see Subsection \ref{subsection21}) the line bundle $L_{\mc C}$ which is trivial has a flat Hermitian metric with Chern connection written as 
\beqn
A_{\mc C} = d + {\bf i} \frac{m}{r} dt,\ m \in \{ 0, 1, \ldots, r-1\}.
\eeqn

\begin{defn}\label{defn51}
An $m$-{\it soliton} is a bounded solution to gauged Witten equation over the infinite cylinder $\Thetait = {\mb R} \times S^1$ equipped with translation invariant metric. More precisely, it is a triple ${\bm v} = (u, \phi, \psi): \Thetait \to {\rm Crit} W \times {\mf k} \times {\mf k}$ solving the equation
\beqn
\left\{ \begin{array}{rcc} \partial_s u + {\mc X}_\phi(u) + J (\partial_t u + {\mc X}_{\frac{\i m}{r} + \psi}(u)) & = & 0,\\
                          \partial_s \psi - \partial_t \phi + [\phi, \psi] + \mu(u) &  = & 0 \end{array} \right.
\eeqn
such that its energy is finite and its image has a compact closure. 
\end{defn}

By the basic result about the asymptotic behavior of vortices (see for example \cite{Venugopalan_quasi}\cite{Chen_Wang_Wang}), one can gauge transform a soliton so that
\begin{enumerate}
\item as $s\to \pm\infty$, $\phi \to 0$ and $\psi$ converges to $\pm \lambda_m \in {\mf k}$. 

\item There exists $x_\pm \in {\rm Crit} W \cap \mu^{-1}(0)$ such that
\beqn
u(s, t) = x_\pm.
\eeqn
\end{enumerate}
So $u$ extends to a continuous section in the orbifold sense. Therefore, a soliton represents an equivariant curve class $\ubar B \in H_2^{\ubar K} ( V; {\mb Z})$ and we have the following energy identity. 

\begin{prop}\label{prop52} {\rm (Energy identity for solitons)}
Given a soliton ${\bm v} = (u, \phi, \psi)$ that represents an equivariant curve class $\ubar B \in H_2^{\ubar K} ( V; {\mb Z})$, one has
\beqn
E(u, \phi, \psi) = \langle [\omega_V^{\ubar K}], \ubar B \rangle.
\eeqn
\end{prop}
\begin{proof}
Using the same calculation of the proof of Theorem \ref{energybound}, we see in \eqref{eqn42} the curvature term $F_{A_{\mc C}}$ vanishes. Hence we obtain an equality. 
\end{proof}

Moreover, for the purpose of proving compactness, one also need to prove that the energy of a nontrivial soliton is bounded from below. 

\begin{thm}\label{thm53}
There exists $\epsilon_W>0$ that only depends on the GLSM space, such that for any soliton ${\bm v}$ with positive energy, one has $E({\bm v}) \geq \epsilon_W$.
\end{thm}

\begin{proof}
One can use a standard argument to prove an estimate such as
\beqn
E({\bm v}; [s_0 + T, s_1 - T] \times S^1) \leq C e^{- \delta T} E({\bm v}; [s_0, s_1] \times S^1)
\eeqn
for constants $C, \delta>0$ independent of ${\bm v}$, whenever the total energy of ${\bm v}$ is small enough. It then implies that $E({\bm v})$ has a positive lower bound. The details are left to the reader.
\end{proof}

\subsection{Decorated dual graphs}

We first fix notations about graphs. In this paper, a graph $\Gamma$ consists of a set of vertices $\V(\Gamma)$, a set of edges $\E(\Gamma)$, and a set of tails $\T(\Gamma)$ with several structural maps describing how each edge connects two vertices and which vertex a tail is attached to. The set of tails $\T(\Gamma)$ are always identified with $\{1, 2, \ldots, n\}$ for some $n \geq 0$ and this identification is implicitly regarded as part of the data. For each graph $\Gamma$, there is an induced graph $\tilde \Gamma$ obtained by cutting off all edges in $\E(\Gamma)$ so that each edge becomes two new tails of $\tilde \Gamma$. The operation $\Gamma \to \tilde \Gamma$ corresponds to the normalization of nodal curves. Denote 
\beqn
\tilde \E(\Gamma):= \T(\tilde \Gamma) \setminus \T(\Gamma).
\eeqn
This is the set of tails obtained from cutting edges of $\Gamma$ so $\# \tilde \E(\Gamma) = 2 \# \E(\Gamma)$. The {\it valence} of a vertex $v \in \V(\Gamma)$, denoted by 
\beqn
\tt{d}(v) \in \{0, 1, 2, \cdots\},
\eeqn
is defined to be the number of tails in $\tilde \Gamma$ that are attached to $v$. 

To describe the combinatorial types of $r$-spin curves we need graphs together with extra structures. 

\begin{defn}
An $r$-spin dual graph (a dual graph for short) is a tuple 
\beqn
(\Gamma, \tt{g}, \tt{m})
\eeqn
where 
\begin{align*}
&\ \tt{g}: \V(\Gamma) \to \{ 0, 1, 2, \cdots,\},\ &\ \tt{m}: \T(\tilde \Gamma) \to {\mb Z}_r 
\end{align*}
are maps. The following conditions are required.
\begin{enumerate}
\item If $\tilde e_\pm \in \tilde \E (\Gamma)$ are obtained by cutting an edge, then $\tt{m}(\tilde e_-) \tt{m}(\tilde e_+) = 1$.

\item If $\tt{g}(v) = 0$, then $\tt{d}(v) \geq 2$; if $\tt{g}(v) = 1$, then $\tt{d}(v) \geq 1$.

\item When $\tt{g}(v) = 0$ and $\tt{d}(v) = 2$, let $t, t'\in T(\tilde \Gamma)$ be the two tails in the normalization which are attached to $v$. Then $t$ and $t'$ are not the same edge in $\Gamma$; moreover, $\tt{m}(t) \tt{m}(t') = 1$.
\end{enumerate}

A vertex $v \in \V(\Gamma)$ is called {\it cylindrical} if $\tt{g}(v) = 0$ and $\tt{d} (v) = 2$. An $r$-spin dual graph is {\it stable} if there is no cylindrical vertices.
\end{defn}

Now we define the notion of maps between dual graphs. 

\begin{defn}\label{defn55}
A map from an $r$-spin dual graph $\Pi$ to another $r$-spin dual graph $\Gamma$ consists of a surjective map $\rho_\V: \V(\Pi) \to \V(\Gamma)$, an injective map $\rho_{\tilde \E}: \tilde \E(\Gamma) \to \tilde \E (\Pi)$, and a bijective map $\rho_\T: \T(\Gamma) \to \T(\Pi)$ satisfying the following properties. 
\begin{enumerate}

\item If $t \in \T(\Gamma)$ is attached to $v\in \V(\Gamma)$, then $\rho_\T(t)$ is attached to $\rho_\V(v)$. 

\item If $\tilde e_-, \tilde e_+\in \tilde \E(\Gamma)$ are obtained from cutting an edge, so are $\rho_{\tilde \E}(\tilde e_-)$ and $\rho_{\tilde \E}(\tilde e_+)$. In other words, $\rho_{\tilde \E}$ is a lift of an injective map $\rho_\E: \E(\Gamma) \to \E(\Pi)$. 

\item If $v, v' \in \V(\Pi)$ are adjacent, then either $\rho_\V(v) = \rho_\V(v')$ or they are adjacent in $\Gamma$; moreover, if $v, v'$ are connected by $\rho_\E(e)$, then $\rho_\V(v)$ and $\rho_\V(v')$ are connected by $e$.

\item If $e \in \E(\Pi)$ connects $v, v' \in \V(\Pi)$ and is not in the image of $\rho_\E$, then $\rho_\V(v) = \rho_\V(v')$. 

\item For each $v\in \V(\Gamma)$, let $\Gamma_v$ be the maximal subgraph of $\Gamma$ which has a single vertex $v$ and let $\Pi_{\rho_\V^{-1}(v)}$ be the maximal subgraph of $\Pi$ whose vertices are $\rho_\V^{-1}(v)\subset \V(\Pi)$. Then 
\beqn
\tt{g}(\Gamma_v) = \tt{g}(\Pi_{\rho_\V^{-1}(v)}).
\eeqn

\item The maps $\rho_\T$ and $\rho_{\tilde \E}$ preserve the map $\tt{m}$.
\end{enumerate}
Here ends Definition \ref{defn55}.
\end{defn}

A partial order $\preq$ among $r$-spin dual graphs is defined as follows: $\Gamma \preq \Pi$ if there is a map from $\Gamma$ to $\Pi$. 

One can use stable $r$-spin dual graphs to label strata of the moduli space $\ov{\mc M}{}_{g,n}^r$. For an $r$-spin dual graph $\Gamma$, denote by $|\Gamma|$ the 1-complex associated to $\Gamma$. The genus of $\Gamma$ is defined as 
\beqn
\tt{g}(\Gamma):= \sum_{v\in \V(\Gamma)} \tt{g}(v) + {\rm rank} H_1( |\Gamma|).
\eeqn
If $\Gamma$ has genus $g$, $n$ tails, and is stable, then one can use $\Gamma$ to label a stratum ${\mc M}_\Gamma^r \subset \ov{\mc M}{}_{g, n}^r$. By abuse of notation, let $\Gamma$ also denote the datum by forgetting the map $\tt{m}$, which can be used to label a stratum ${\mc M}_\Gamma \subset \ov{\mc M}_{g, n}$ of the Deligne--Mumford space. By using the partial order $\preq$, the closure of ${\mc M}{}_{\Gamma}^r$ can be described as 
\beqn
\ov{\mc M}{}_{\Gamma}^r = \bigsqcup_{\Pi \preq \Gamma, \Pi = \Pi^{\rm st} } {\mc M}{}_{\Pi}^r. 
\eeqn

Moreover, consider a stable $r$-spin curve ${\mc C}$ of type $\Gamma$. The space of gluing parameters ${\mc V}_{\rm res}$ is stratified in an obvious way: a stratum is the subspace defined by the vanishing of gluing parameters at a subset of nodes. We use letters $\alpha, \beta, \cdots$ to label strata of ${\mc V}_{\rm res}$. Then each stratum $\alpha$ describes a particular way of resolving nodes of ${\mc C}$, hence induces an $r$-spin dual graph $\Gamma^\alpha$ with a morphism
\beq\label{eqn51}
\rho^\alpha: \Gamma \to \Gamma^\alpha,\ \ \alpha \in {\rm Strata} ({\mc V}_{\rm res}).
\eeq
If $\eta \in {\mc V}_{\rm def}$ and $\zeta \in {\mc V}_{\rm res}^\alpha\subset {\mc V}_{\rm res}$, then one has
\beqn
[{\mc C}_{\eta, \zeta}] \in {\mc M}_{\Gamma^\alpha}^r.
\eeqn

Given an $r$-spin dual graph $\Gamma$, there is a notion of {\it stabilization}, denoted by $\Gamma^{\rm st}$. Notice that there is no morphism $\Gamma \to \Gamma^{\rm st}$ in any sense. 

\begin{lemma}\label{lemma56}
Let $\rho: \Gamma \to \Pi$ be a map between $r$-spin dual graphs. Then it induces a canonical map $\rho^{\rm st}: \Gamma^{\rm st} \to \Pi^{\rm st}$ between their stabilizations. 
\end{lemma}

\begin{proof}
Left to the reader.
\end{proof}

The combinatorial type of such smooth or nodal $r$-spin curves can be described by {\it decorated dual graphs}. 

\begin{defn}\label{defn57} {\rm (Decorated dual graph)}
A {\it decorated dual graph} is a tuple 
\beqn
{\sf \Gamma} = \Big( \Gamma, (\ubar B{}_v)_{v \in \V(\Gamma)} \Big)
\eeqn
where $\Gamma$ is an $r$-spin dual graph and $\ubar B{}_v\in H_2^{\ubar K}( V; {\mb Z})$ is a collection of equivariant curve classes indexed by all vertices $v \in \V(\Gamma)$. A decorated dual graph is {\it stable} if for each cylindrical vertex $v$, $\ubar B{}_v \neq 0$. 
\end{defn}

There is a natural map ${\sf \Gamma} \mapsto \Gamma$ by forgetting the decorations. The partial order can be lifted to a partial order among all decorated dual graphs, which is still denoted by $\sf\Pi \preq \sf\Gamma$. We omit the detail of the definition.

\subsection{Stable solutions}\label{subsection53}

\begin{defn}\label{defn58} {\rm (Stable solutions)}
Let $\sf \Gamma$ be a decorated dual graph. A {\it solution to the gauged Witten equation of combinatorial type $\sf\Gamma$} consists of a smooth or nodal $r$-spin curve ${\mc C}$ of combinatorial type $\Gamma$, and a collection of objects
\beqn
{\bm v}:= \Big[ ({\bm v}_v)_{ v \in \V(\Gamma)} \Big]
\eeqn
Here for each stable vertex $v \in \V (\Gamma)$, ${\bm v}_v = (P_v, A_v, u_v)$ is a solution to the gauged Witten equation over the smooth $r$-spin curve ${\mc C}_v$; for each cylindrical vertex $v \in \V (\Gamma)$, ${\bm v}_v = (u_v, \phi_v, \psi_v)$ is a soliton. They satisfy the following conditions.
\begin{enumerate}
\item For each vertex $v \in \V (\Gamma)$, the equivariant curve class represented by ${\bm v}_v$ coincides with $\ubar B{}_v$ (which is contained in the data $\sf \Gamma$).

\item For each edge $e \in \E(\Gamma)$ corresponding to a node in ${\mc C}$, let $\tilde e_-$ and $\tilde e_+$ be the two tails of $\tilde \Gamma$ obtained by cutting $e$, which are attached to vertices $v_-$ and $v_+$ (which could be equal), corresponding to preimages $\tilde w_-$ and $\tilde w_+$ of $w$ in $\tilde {\mc C}$. Then the evaluations of ${\bm v}_{v_-}$ at $w_-$ and ${\bm v}_{v_+}$ at $w_+$ (which exist by Theorem \ref{thm42}) are equal.

\end{enumerate}

The solution is called {\it stable} if $\sf\Gamma$ is stable. The energy (resp. curve class) of a solution ${\bm v}$ is the sum of energies (resp. curve classes) of each component.
\end{defn}

One can define an equivalence relation among all stable solutions, and define the corresponding moduli spaces of equivalence classes. 

\begin{defn}\label{defn59}
Given a decorated dual graph $\sf\Gamma$, let 
\beqn
{\mc M}_{\sf\Gamma}:= {\mc M}{}_{\sf\Gamma} (V, G, W, \mu ) 
\eeqn
be the set of equivalence classes of stable solutions of combinatorial type $\sf\Gamma$. Denote
\beqn
\ov{\mc M}{}_{g, n}^r(V, G, W, \mu) = \bigsqcup_{\sf\Gamma} {\mc M}_{\sf\Gamma}
\eeqn
where the disjoint union is taken for all decorated dual graphs with genus $g$ and $n$ tails. For each degree $\ubar B \in H_2^{\ubar K}(V; {\mb Z})$, denote by
\beqn
\ov{\mc M}{}_{g,n}^r(V, G, W, \mu; \ubar B) \subset \ov{\mc M}{}_{g,n}^r(V, G, W, \mu)
\eeqn
the subset corresponding to types with curve class $\ubar B$. For each $E>0$, denote by 
\beqn
\ov{\mc M}{}_{g,n}^r(V, G, W, \mu)_{\leq E} \subset \ov{\mc M}{}_{g,n}^r(V, G, W, \mu)
\eeqn
the subset of elements represented by solutions whose energy is at most $E$.
\end{defn}

\subsection{Compactness of the moduli spaces}

As usual, the topology of the moduli spaces $\ov{\mc M}{}_{g, n}^r(V, G, W, \mu)$ is induced from a notion of sequential convergence. The reader can compare with the notion of convergence defined by Venugopalan \cite[Definition 3.4]{Venugopalan_quasi}.

We first define the convergence for a sequence of solutions defined on smooth stable $r$-spin curves.

\begin{defn} {\rm (Convergence of smooth solutions)} \label{defn510}
Let ${\mc C}_k$ be a sequence of stable smooth $r$-spin curves of genus $g$ with $n$ markings. Let ${\bm v}_k$ be a sequence of stable solutions over ${\mc C}_k$. Let ${\mc C}$ be a smooth or nodal $r$-spin curve of genus $g$ with $n$ markings and ${\bm v}$ be a stable solution over ${\mc C}$. We say that ${\bm v}_k$ {\it converges} to ${\bm v}$ if the following conditions hold. 
\begin{enumerate}

\item The isomorphism classes of ${\mc C}_k$ in $\ov{\mc M}{}_{g, n}^r$ converges to that of ${\mc C}^{\rm st}$. By Lemma \ref{domainconverge}, there exist a sequence of deformation parameters $\eta_k$, a sequence of gluing parameters $\zeta_k$, a sequence of isomorphisms 
\beqn
{\mc C}_k \cong {\mc C}_{\eta_k, \zeta_k}.
\eeqn

\item Let the dual graph of ${\mc C}$ be $\Gamma$. For each stable vertex $v \in \V(\Gamma^{\rm st})$, for any compact subset $Z \subset \Sigma_{{\mc C}, v}^*$ which is disjoint from markings and nodes, there is a bundle isomorphism $g_k: P|_Z \to P_k|_{Z_k}$ that covers $\iota_k$ such that $\ubar g_k^* {\bm v}_k|_Z$ converges to ${\bm v}|_Z$. Here $\ubar g_k: \ubar P|_Z \cong \ubar P_k|_Z$ is induced from $g_k$ and the isomorphism $P_{\mc C}|_Z \cong P_{{\mc C}_k}|_Z$. Here we identified $Z$ with a subset $Z_k:= Z_{\eta_k, \zeta_k} \subset \Sigma_{{\mc C}_{\eta_k, \zeta_k}}^*$ using the convention of Notation \ref{notation27}. 

\item For each cylindrical vertex $v\in \V(\Gamma)$, it is mapped under the stabilization map ${\mc C} \to {\mc C}^{\rm st}$ to either a marking or a node. Then there exists a sequence of points $y_k  = s_k + {\bf i} t_k$ in the cylinder satisfying the following condition. For any $R>0$, the cylinder $[s_k-R, s_k + R]\times S^1$ is contained in the long cylinder $(-T_{\eta_k, \zeta_k}(w) , T_{\eta_k, \zeta_k}(w)) \times S^1$ (see Notation \ref{notation27}). The coordinates induce trivializations of $P_{{\mc C}_k}$ over the cylinder. Moreover, there is a sequence of bundle isomorphisms $g_k: P_v|_{[-R, R]\times S^1} \to P_k|_{[s_k - R, s_k + R] \times S^1}$ (where $P_v$ is trivialized) that covers the map $y \mapsto y + y_k$ such that $\ubar g_k^* {\bm v}_k$ converges in compact convergence topology, meaning that over any precompact open subset of the domain, $\ubar g_k^* {\bm v}_k$ converges uniformly with all derivatives, to ${\bm v}_v |_{[-R, R] \times S^1}$. Here $\ubar g_k$ is the bundle isomorphisms induced from $g_k$ and the trivializations of $P_{{\mc C}_k}$. 

\item There is no energy lost, namely
\beqn
E({\bm v}) = \lim_{k \to \infty} E({\bm v}_k).
\eeqn
\end{enumerate}
Here ends Definition \ref{defn510}.
\end{defn}

We also define the sequence convergence of solitons. First, a {\it broken soliton} is a finite sequence of solitons ${\bm v}_1 \# \cdots \# {\bm v}_s$ satisfying the following conditions. 
\begin{enumerate}
\item The $r$-spin structures of each ${\bm v}_i$ are labelled by the same $m \in {\mb Z}_r$. 

\item Each ${\bm v}_i$ is a nontrivial soliton, i.e., having positive energy.

\end{enumerate}

Now we define the convergence of solitons (see Definition \ref{defn51} for the definition of solitons). 

\begin{defn} \label{defn511} {\rm (Convergence of solitons)}
Let ${\bm v}_k$ be a sequence of $q$-solitons with $q = \frac{m}{r}$ for some $m \in \{0, 1, \ldots, r-1\}$. We say that ${\bm v}_k$ converges to a broken $q$-solitons ${\bm v}_1 \# \cdots \# {\bm v}_s$ if the following conditions hold. 
\begin{enumerate}

\item For $b = 1, \ldots, s$, there exist a sequence of points $y_{b, k} = s_{b, k} + {\bf i} t_{b, k} \in {\mc C}_k$ and a sequence of gauge transformations $g_k: \Thetait \to K$ such that $g_{b, k}^* \phi_{b, k}^* {\bm v}_k$ converges in compact convergence topology. to ${\bm v}_b$. Here $\phi_{b, k}(y) = y + y_{b, k}$ is the translation on the cylinder. 

\item The sequences of points $y_{b, k}$ satisfy 
\beqn
b < b' \Longrightarrow \lim_{k \to \infty} s_{b', k} - s_{b, k} = +\infty.
\eeqn

\item There is no energy lost, namely
\beqn
E({\bm v}) = \lim_{k \to \infty} E({\bm v}_k).
\eeqn
\end{enumerate}
Here ends Definition \ref{defn511}.
\end{defn}

Now we define sequential convergence of general stable solutions.

\begin{defn}\label{defn512}{\rm (Convergence of stable solutions)}
Let ${\mc C}_k$ be a sequence of smooth or nodal $r$-spin curve of type $(g, n)$ and ${\bm v}_k$ be a sequence of stable solutions over ${\mc C}_k$. Let ${\mc C}$ be another smooth or nodal $r$-spin curve of type $(g, n)$ and ${\bm v}$ be a stable solution over ${\mc C}$. We say that $({\mc C}_k, {\bm v}_k)$ converges to $({\mc C}, {\bm v})$ if after removing finitely many elements in the sequence, the sequence can be divided into finitely many subsequences, and each subsequence satisfies the following conditions. Without loss of generality, we assume there is only one subsequence, which is still indexed by $k$.
\begin{enumerate}

\item The underlying $r$-spin dual graphs of ${\mc C}_k$ are all isomorphic. Denote it by $\Gamma$. Let the combinatorial type of ${\mc C}$ be $\Pi$.

\item The sequence of stabilizations ${\mc C}_k^{\rm st}$ converge to the stabilization ${\mc C}^{\rm st}$. Therefore there exist a sequence of deformation parameters $\eta_k$, a sequence of gluing parameters $\eta_k$, and a sequence of isomorphisms of $r$-spin curve ${\mc C}_k^{\rm st} \cong {\mc C}^{\rm st}_{\eta_k, \zeta_k}$. Moreover, all $\zeta_k$ belong to the same stratum $\alpha$ of the space of gluing parameters of ${\mc C}$ hence the isomorphisms ${\mc C}_k^{\rm st} \cong {\mc C}^{\rm st}_{\eta_k, \zeta_k}$ induce a fixed morphism $\rho^{\rm st}: \Pi^{\rm st} \to \Gamma^{\rm st}$ (see Definition \ref{defn55} and \eqref{eqn51}). 

\item There is a map of $r$-spin dual graphs $\rho: \Gamma \to \Pi$ such that its induced map between stabilizations (see Lemma \ref{lemma56}) is $\rho^{\rm st}$.

\item For each vertex $v \in \V(\Gamma)$, let the corresponding irreducible component in ${\mc C}_k$ be ${\mc C}_{k, v}$ and let $\tilde {\bm v}_{k, v}$ be the restriction of ${\bm v}_k$ to ${\mc C}_{k, v}$. Let $\Pi_v$ be the maximal subgraph of $\Pi$ whose set of vertices is $\rho_\V^{-1}(v)$ and let $({\mc C}_{\Pi_v}, {\bm v}_{\Pi_v})$ be the restriction of $({\mc C}, {\bm v})$ to $\Pi_v$. We require that ($v$ is either cylindrical or not), the sequence $({\mc C}_{k, v}, {\bm v}_{k, v})$ converges to $({\mc C}_{\Pi_v}, {\bm v}_{\Pi_v})$ in the sense of Definition \ref{defn510} or Definition \ref{defn511}. 

\end{enumerate}
\end{defn}

One can also show that the notion of sequential convergence defined by Definition \ref{defn512} descends to a notion of sequential convergence in the moduli space $\ov{\mc M}{}_{g, n}^r(V, G, W, \mu)$. We explain why the notion of sequential convergence induces a unique topology on the moduli spaces. The reason is similar to the case of Gromov--Witten theory and it has been explained in \cite[Section 5.6]{McDuff_Salamon_2004} and \cite{McDuff_Salamon_erratum_2} in the context for pseudoholomorphic spheres. In our case, since we can regard solutions to the gauged Witten equation as vortices with structure group $\ubar K$, the topology is induced in the same way as \cite[Section 5]{Venugopalan_quasi}. We state the result as the following proposition.

\begin{prop}
Define a topology on $\ov{\mc M}{}_{g, n}^r(V, G, W, \mu)$ as follows. A subset of $\ov{\mc M}{}_{g, n}^r(V, G, W, \mu)$ is closed if it is closed under the sequential convergence. Then this topology is Hausdorff, first countable, and its set of converging sequences coincides with the set of converging sequences defined by Definition \ref{defn512}.
\end{prop}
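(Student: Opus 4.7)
The plan is to realize the topology explicitly via a countable neighborhood basis at each point, from which first countability is automatic, and then derive Hausdorffness and the identification of convergent sequences in the standard fashion for a first-countable topology. Fix a point $p = [({\mc C}, {\bm v})]$ represented by a stable solution of combinatorial type ${\sf \Gamma}$, together with a resolution datum of the stabilization ${\mc C}^{\rm st}$, cylindrical coordinates on all cylindrical ends and neck regions, a bundle representative, and a specific gauge on each vertex-component. For rational $\epsilon, R > 0$ and a countable base $\{K_n\}$ of neighborhoods of the central fibre in the universal unfolding of ${\mc C}^{\rm st}$, let ${\mc U}_{\epsilon, R, n}$ consist of those $[({\mc C}', {\bm v}')]$ satisfying: (i) ${\mc C}'{}^{\rm st}$ corresponds to a fibre in $K_n$; (ii) on each stable component of ${\sf \Gamma}$, after removing cylinders of length $R$ at every special point, there is a bundle isomorphism making the $C^0$-distance between ${\bm v}$ and the pulled-back ${\bm v}'$ at most $\epsilon$; (iii) for each unstable vertex $v$ of ${\sf \Gamma}$ there is a sub-cylinder of ${\mc C}'$ whose translation and gauge transform is $\epsilon$-close in $C^0$ to ${\bm v}_v$ on $[-R, R] \times S^1$; (iv) on the remaining neck and cylindrical regions of ${\mc C}'$, every loop ${\bm x}'_s$ lies within $\epsilon$ of the appropriate critical loop prescribed by the limiting holonomies of ${\bm v}$ in the distance $d_{\mc L}$ of Section \ref{section4}; and (v) $|E({\bm v}') - E({\bm v})| < \epsilon$. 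This countable family is declared to be the neighborhood basis at $p$.

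The key claim is that $[({\mc C}_k, {\bm v}_k)] \to p$ in the sense of Definition \ref{defn613} if and only if the sequence eventually enters every ${\mc U}_{\epsilon, R, n}$. The ``only if'' direction is a direct translation of the conditions of Definition \ref{defn613}: c.c.t.\ convergence on compact thick regions gives (ii), c.c.t.\ convergence modulo translation on unstable components gives (iii), and the no-energy-loss condition combined with the annulus lemma (Lemma \ref{lemma46}) and the uniform asymptotic estimate (Theorem \ref{thm48}) forces the intermediate loops in the thin regions to stay near the critical loops, yielding (iv); condition (v) is no-energy-loss itself. The ``if'' direction is a soft recovery argument: being trapped in every ${\mc U}_{\epsilon, R, n}$ forces, after gauge fixing, smoothly converging subsequences on every compact thick region by elliptic regularity for the vortex equation, while condition (iv) together with the soliton energy gap (Theorem \ref{thm63}) precludes any energy escaping into additional solitons on the thin regions, so the limit is gauge-equivalent to ${\bm v}$ and matches the bubble tree of ${\sf \Gamma}$; uniqueness of this limit is guaranteed by Proposition \ref{prop614}.

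The three conclusions then follow formally. First countability is immediate from the countable basis. For Hausdorffness, if distinct points $p_1 \neq p_2$ had mutually intersecting bases, a diagonal extraction would produce a sequence converging in the sense of Definition \ref{defn613} to both; Proposition \ref{prop614}, together with the rigidity of the recovered bubble tree, then forces $p_1 = p_2$. Finally, since a first-countable topology is characterized by its convergent sequences, the sequentially closed sets of the topology generated by $\{{\mc U}_{\epsilon, R, n}\}$ are precisely its closed sets, so the declared topology and the basis topology coincide, and their convergent sequences are exactly those of Definition \ref{defn613}. The main obstacle is the ``if'' direction of the central equivalence, in particular ruling out hidden bubbling on the thin regions: it is exactly condition (iv), giving a uniform $C^0$-control over arbitrarily long cylinders, combined with the soliton energy gap and the Morse--Bott asymptotic from Section \ref{section4}, that prevents a concealed soliton from escaping detection and ensures the limiting holonomies match across each newly formed node. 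This mechanism is standard in Gromov-type compactness arguments (cf.~\cite{McDuff_Salamon_2004, Venugopalan_quasi}), but the bookkeeping across vertices in ${\sf \Gamma}$, where the geometric phase hypothesis is essential for the matching of evaluations via the Morse--Bott asymptotic, is where the real work lies.
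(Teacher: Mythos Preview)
The paper gives no proof of this proposition; it simply defers to \cite[Section 5.6]{McDuff_Salamon_2004} and \cite[Section 5]{Venugopalan_quasi}. Your proposal is exactly a sketch of that standard argument, so the approaches agree in spirit.

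There is, however, a genuine step you have skipped. For the countable family $\{{\mc U}_{\epsilon, R, n}(p)\}$ to generate a first-countable topology you must verify the neighborhood-basis axiom: every $q \in {\mc U}_{\epsilon, R, n}(p)$ must contain some ${\mc U}_{\epsilon', R', n'}(q)$. Your final paragraph tacitly assumes this when it asserts that ``the topology generated by $\{{\mc U}_{\epsilon, R, n}\}$'' is first countable and hence determined by its sequences. This step is the content of \cite[Lemma 5.6.5]{McDuff_Salamon_2004} and is not automatic: one must show that closeness to $({\mc C}', {\bm v}')$ in the sense of your conditions (i)--(v) centered at $q$ implies closeness to $({\mc C}, {\bm v})$ centered at $p$, which requires tracking how bubble trees, neck lengths, and energy bounds compose under the comparison. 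Without it your logical chain ``basis topology is first countable $\Rightarrow$ its closed sets are its sequentially closed sets $\Rightarrow$ it coincides with the declared topology'' does not start. A second, minor point: Proposition~\ref{prop614} records continuity of holonomies, homology classes, and evaluations, but does not assert uniqueness of limits; the uniqueness you invoke for Hausdorffness is a separate (standard) argument based on the structure of c.c.t.\ convergence plus the no-energy-loss condition, and you should state it as such rather than cite \ref{prop614}.
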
	

The next task is to prove the (sequential) compactness. 

\begin{thm} \label{thm514} {\rm (Compactness)}
\begin{enumerate}

\item For any $E>0$, $\ov{\mc M}{}_{g,n}^r(V, G, W, \mu)_{\leq E}$ is compact. 

\item For any stable decorated dual graph $\sf\Gamma$ with $n$ tails and genus $g$, $\ov{\mc M}{}_{\sf \Gamma}(V, G, W, \mu)$ is a compact subset of $\ov{\mc M}{}_{g, n}^r (V, G, W, \mu)$. 
\end{enumerate}
\end{thm}

\begin{proof}
One only needs to prove the first item since the second item follows from the first and Theorem \ref{energybound}. To prove the first item, one only needs to prove sequential compactness. Indeed our case is very similar to the case of \cite{Venugopalan_quasi} so we will be sketchy. Let $({\mc C}_k, {\bm v}_k)$ be a sequence of stable solutions which represent a sequence of points in $\ov{\mc M}{}_{g, n}^r(V, G, W, \mu)_{\leq E}$. By Theorem \ref{thm41}, the gauged maps ${\bm v}_k$ are holomorphic. Then one can view them as solutions to the (perturbed) symplectic vortex equation (see Remark \ref{rem46}) and one can use the compactness result of vortices. We remind the reader of two facts we will use implicitly in the argument. First, by Theorem \ref{thm43}, given the energy bound, solutions are all contained in a compact subset of $V$. Second, by the aspherical condition of $V$ (see Hypothesis \ref{hyp34}), one does not need to worry about bubbling and hence with respect to the cylindrical metrics, there holds
\beq\label{eqn52}
\sup_k \| d_{A_k} u_k \|_{L^\infty} < \infty. 
\eeq

Now we start to construct the limit. First, the stabilized domain ${\mc C}_k^{\rm st}$ has a subsequence (still indexed by $k$) converging to a limiting stable $r$-spin curve, denoted by ${\mc C}^{\rm st}$. Then one can identify as $r$-spin curves ${\mc C}_k^{\rm st}$ with ${\mc C}^{\rm st}_{\eta_k, \zeta_k}$ for certain deformation parameters $\eta_k$ and gluing parameter $\zeta_k$. Moreover, by taking a further subsequence, we may assume that all $\eta_k$ belong to the same stratum of the space of gluing parameters ${\mc V}_{\rm res}$. One can choose an exhausting sequence of precompact open subsets $Z_1 \subset Z_2 \subset \cdots \subset \Sigma_{{\mc C}^{\rm st}}^*$. One can identify each $Z_i$ with a subset $Z_{i,k} \subset \Sigma_{{\mc C}_k^{\rm st}}^*$  for $k$ sufficiently large. Then over $Z_i$, by the vortex equation $* F_{A_k} + \mu(u_k) = 0$,  the uniform $C^0$-bound of $u_k$, and the convergence of the metric (see Definition \ref{defn29}), one has
\beqn
\sup_k \| F_{A_k}\|_{L^\infty(Z_i)} < \infty.
\eeqn
Then by the Uhlenbeck compactness, up to gauge transformations on $Z_i\cong Z_{i,k}$, a subsequence of $A_k$ converges weakly in $W^{1,p}$ (for certain $p>2$) to a limit $W^{1,p}$-connection $A$ on $Z_i$. On the other hand, the equation $\ov\partial_{A_k} u_k = 0$ can be viewed as a perturbed pseudoholomorphic curve equation where the perturbation term depends on the connections $A_k$ and $A_{{\mc C}_k}$. The weak convergence $A_k \to A$ and the convergence of $A_{{\mc C}_k}$ (see Lemma \ref{lemma212}) implies that the perturbation term converges at least in $C^0$. Moreover, by the uniform $C^0$-bound on $u_k$ and \eqref{eqn52}, one can construct a weak $W^{1,p}$-limit of $u_k$, denoted by $u$, over $Z_i$. The limit is smooth modulo gauge transformations; moreover, using the elliptic bootstrapping one can show that the convergence $(A_k, u_k)$ to $(A, u)$ is indeed $C^\infty$ over $Z_i$. One can repeat the argument for all $Z_i$ and select further subsequences and obtain the limiting solution over $\Sigma_{{\mc C}^{\rm st}}$. 

It remains to construct all soliton components of the limit. In ${\mc C}_k$ there are regions which are the infinite cylinder $(-\infty, +\infty) \times S^1$ corresponding to soliton components of ${\bm v}_k$, semi-infinite cylinders $(0, +\infty) \times S^1$ corresponding to punctures or {\it old} nodes of ${\mc C}^{\rm st}$ (the nodes at which the gluing parameters $\zeta_k$ are always zero), or finite cylinders $(-T_k, T_k) \times S^1$ with $T_k \to + \infty$ corresponding to {\it new} nodes of ${\mc C}^{\rm st}$ (the nodes at which the gluing parameters $\zeta_k$ are always nonzero). It is standard (see for example the proof of compactness in \cite{Xu_VHF} or \cite{Venugopalan_quasi}) to show that the solutions ${\bm v}_k$ restricted to these cylindrical regions converge (subsequentially) up to adding broken solitons. (One needs the convergence of the cylindrical metric (see Definition \ref{defn29}) and the convergence of the connection $A_{{\mc C}_k}$ (see Lemma \ref{lemma212}.) Theorem \ref{thm53} implies that the length of the limiting broken solitons are finite and no energy lost. 

This finishes the construction of the limit as a solution to the vortex equation. It is easy to recover from the $r$-spin structure of the limiting domain curve the limit as a solution to the gauged Witten equation and routine to verify that the object we constructed is indeed a limit in the sense of Definition \ref{defn512}. This finishes the proof.
\end{proof}

\section{Virtual Cycles and Cohomological Field Theory}\label{section6}

In this section we define the GLSM correlation functions under the assumption of the existence and properties of the virtual cycles on the moduli spaces. The virtual cycle construction and the proof of the properties are deferred to the companion paper \cite{Tian_Xu_geometric_2}. Lastly, we give a brief argument showing the relation between the GLSM correlation functions and the Gromov--Witten invariants of the classical vacuum and explain the connection with mirror symmetry.

\subsection{The virtual cycle and the GLSM correlation functions}

Let ${\sf \Gamma}$ be a stable decorated dual graph with $n$ tails (not necessarily connected). We have defined the compactification of the moduli space of gauge equivalence classes of solutions to the gauged Witten equation with combinatorial type $\sf\Gamma$, denoted by 
\beqn
\ov{\mc M}_{\sf \Gamma}:= \ov{\mc M}_{\sf \Gamma}(V, G, W, \mu). 
\eeqn
There is a natural {\it evaluation maps}
\beqn
{\rm ev}: \ov{\mc M}_{\sf\Gamma} \to X^n.
\eeqn
There is also the {\it forgetful map} (or called the {\it stabilization map})
\beqn
{\rm st}: \ov{\mc M}_{\sf \Gamma} \to \ov{\mc M}_{\Gamma} \subset \ov{\mc M}_{g, n}. 
\eeqn
Here $\Gamma$ is the underlying dual graph of $\sf\Gamma$, where unstable rational components are contracted, and ${\rm st}$ is defined by forgetting the $r$-spin structure, forgetting the gauged maps, and stabilization. Denote
\beqn
( \rm{ev}, {\rm st}): \ov{\mc M}_{\sf\Gamma} \to X^{\sf\Gamma} \times \ov{\mc M}_{\Gamma}.
\eeqn
In order to define the correlation functions, what we need is the pushforward of the virtual fundamental class of $\ov{\mc M}_{\sf\Gamma}$. The mathematical statement about the properties of such the pushforward is stated as follows. 

\begin{thm}\label{thm61}\cite{Tian_Xu_geometric_2}
For each stable decorated dual graph ${\sf \Gamma}$ with $n$ tails (not necessarily connected), there is a homology class 
\beqn
[\ov{\mc M}_{\sf \Gamma}]^{\rm vir} \in H_*( \ov{\mc M}_\Gamma; {\mb Q}) \otimes H_*(X^n; {\mb Q}).
\eeqn
The collection of these classes satisfy the following properties. 
\begin{enumerate}

\item \label{thm61a} {\bf (Dimension)} If $\sf\Gamma$ is connected, then dimension of the class $[\ov{\mc M}_{\sf \Gamma}]^{\rm vir}$
\beqn (2-2g) {\rm dim}_{\mc C} X + 2 {\rm deg} {\sf\Gamma} + {\rm dim}_{\mb R} \ov{\mc M}_\Gamma.
\eeqn
Here if the equivariant curve class of $\sf\Gamma$ is $\ubar B \in H_2^{\ubar K}(V; {\mb Z})$, then
\beqn
{\rm deg}{\sf\Gamma}:= \langle c_1^{\ubar K} (TV), \ubar B \rangle \in {\mb Z}.
\eeqn

\item \label{thm61b} {\bf (Disconnected Graph)} Let ${\sf \Gamma}_1, \ldots, {\sf \Gamma}_k$ be connected decorated dual graphs. Let ${\sf \Gamma} = {\sf \Gamma}_1 \sqcup \cdots \sqcup {\sf\Gamma}_k$ be the disjoint union. Then one has 
\beqn
[\ov{\mc M}_{\sf \Gamma}]^{\rm vir} = \prod_{\alpha=1}^k [\ov{\mc M}_{{\sf \Gamma}_\alpha}]^{\rm vir}.
\eeqn

\item \label{thm61c} {\bf (Cutting Edge)} Let $\sf \Gamma$ be the decorated dual graph with $n$ markings and let $\sf \Pi$ be the graph obtained from $\sf \Gamma$ by shrinking a loops. Then one has 
\beqn
( \iota_{\Pi} )_* [ \ov{\mc M}_{\sf \Pi} ]^{\rm vir} = [ \ov{\mc M}_{\sf \Gamma} ]^{\rm vir} \cap [ \ov{\mc M}_{ \Pi} ];
\eeqn
Here the map 
\beqn
(\iota_{ \Pi} )_*: H_* ( X^n ) \otimes H_* ( \ov{\mc M}_{\Pi} ) \to H_* ( X^n ) \otimes H_* ( \ov{\mc M}_{ \Gamma} )
\eeqn
is induced from the inclusion $\iota_{\Pi}: \ov{\mc M}_{\Pi} \hookrightarrow \ov{\mc M}_{ \Gamma}$. 

\item \label{thm61d} {\bf (Composition)} Let $\sf \Pi$ be a decorated dual graph with $n$ markings with one distinguished edge. Let $\tilde {\sf \Pi}$ be the decorated dual graph obtained from $\sf \Pi$ by replacing this distinguished edge by two tails. Let $\Delta_X$ be the diagonal of $X$. Notice that there is a natural isomorphism 
\beqn
\ov{\mc M}_{\Pi} \cong \ov{\mc M}_{{\tilde \Pi}}.
\eeqn
Then one has
\beqn
[ \ov{\mc M}_{\sf \Pi} ]^{\rm vir} = [ \ov{\mc M}_{\tilde {\sf \Pi}} ]^{\rm vir} \setminus {\rm PD} (\Delta_X ).
\eeqn
Here we use the slant product 
\beqn
\setminus: H_* ( X^{n+2} )  \otimes H^* ( X \times X ) \to H_* ( X^n ).
\eeqn
\end{enumerate}
\end{thm}

From this theorem one can define a cohomological field theory on the cohomology of $X$ as follows. Indeed it is completely parallel to the case of Gromov--Witten invariants. Recall that the coefficient field is the Novikov field of formal Laurent series 
\beqn
\Lambda:= \Big\{ \sum_{i=1}^\infty a_i T^{\lambda_i} \ |\ \ a_i \in {\mb Q},\ \lim_{i \to \infty} \lambda_i = +\infty \Big\}
\eeqn
where $T$ is a formal variable. For each $g, n$ with $2g + n \geq 3$, consider the set of decorated dual graphs with only one vertex and no edges of genus $g$ with $n$ tails. Such decorated dual graphs are determined by an equivariant curve class $\ubar B$. Denote the corresponding moduli space by 
\beqn
\ov{\mc M}{}_{g, n}^r(\ubar B).
\eeqn
Then for classes 
\begin{align*}
&\ \alpha_a \in H^* ( X; {\mb Q} ),\ a = 1, \ldots, n,\  &\ \beta \in H^* ( \ov{\mc M}_{g, n}; {\mb Q} ),
\end{align*}
define 
\beq\label{eqn61}
\langle \alpha_1 \otimes \cdots \otimes \alpha_n; \beta \rangle_{g, n}^{\rm GLSM}:= \sum_{\ubar B} \left\langle [ \ov{\mc M}{}_{g, n}^r(\ubar B ) ]^{\rm vir}, \alpha_1 \otimes \cdots \otimes \alpha_n \otimes \beta \right\rangle T^{\langle [\omega_V^{\ubar K}], \ubar B \rangle}.
\eeq
Here we implicitly used the K\"unneth formula for the cohomology of products.

Before verifying the splitting properties, one needs to show that the right hand side of \eqref{eqn61} is an element of $\Lambda$. This follows from the compactness (Theorem \ref{thm514}). Indeed, below any energy level $E$, if there are infinitely different classes $\ubar B{}_k$ with $E(\ubar B{}_k) \leq E$ (see Theorem \ref{energybound} for the notation $E(\ubar B)$) which contribute to the expression (i.e., the moduli spaces are nonempty), then there is a sequence of solutions representing ${\ubar B}_k$. Then Theorem \ref{thm514} implies that a subsequence converges modulo gauge transformation to a stable solution. However, the convergence implies that the homology classes $\ubar B{}_k$ should stabilize for large $k$. This is a contradiction. Hence \eqref{eqn61} is well-defined. Then the correlation can extends to classes with $\Lambda$-coefficients and hence defines a multilinear function on the cohomology. 

We now verify the splitting properties (see Theorem \ref{thm11}). The first splitting property follows from the {\bf (Cutting Edge)} and the {\bf (Composition)} properties of the virtual cycles. Indeed, let $\ubar B \in H_2^{\ubar K} (V; {\mb Z})$ be an equivariant curve class, $g\geq 1$, $n \geq 0$ with $2g + n > 3$. Let $\sf \Gamma$ be the decorated dual graph with a single vertex of genus $g$ labelled by $\ubar B$ and with $n$. Let $\sf \Pi$ be the decorated dual graph obtained from $\sf \Gamma$ by shrinking a non-separating loop on the surface. Let
\beqn
\alpha_1, \ldots, \alpha_n \in H^* ( X; {\mb Q} ),\ \ \beta \in H^* ( \ov{\mc M}_{g, n}; {\mb Q} )
\eeqn
be cohomology classes. Then the {\bf (Cutting Edge)} property implies that 
\beqn
\left\langle [ \ov{\mc M}_{\sf \Pi} ]^{\rm vir}, \alpha_1 \otimes \cdots \otimes \alpha_n \otimes \iota_{\Pi}^* \beta \right\rangle = \left\langle  [ \ov{\mc M}_{\sf \Gamma} ]^{\rm vir}, \alpha_1 \otimes \cdots \otimes \alpha_n \otimes \beta \cup \gamma_\Pi \right\rangle.
\eeqn
Here $\gamma_\Pi \in H^*(\ov{\mc M}_{g,n}; {\mb Q})$ is the Poincar\'e dual to the divisor $\ov{\mc M}_\Pi \subset \ov{\mc M}_{g,n}$. On the other hand, let $\sf \Xi$ be the decorated dual graph obtained from $\sf \Gamma$ by replacing the node with two markings. Then the {\bf (Composition)} property implies that 
\beqn
\left\langle [ \ov{\mc M}_{\sf \Pi} ]^{\rm vir}, \alpha_1 \otimes \cdots \otimes \alpha_n \otimes \iota_{\Pi}^* \beta \right\rangle = \left\langle [ \ov{\mc M}_{\sf \Xi} ]^{\rm vir}, \alpha_1 \otimes \cdots \otimes \alpha_n \otimes {\rm PD}( \Delta_X ) \otimes \iota_{\Pi}^* \beta \right\rangle.
\eeqn
Hence 
\begin{multline}\label{eqn62}
\left\langle [ \ov{\mc M}_{\sf \Gamma} ]^{\rm vir}, \alpha_1 \otimes \cdots \otimes \alpha_n \otimes \beta \cup \gamma_\Pi \right\rangle \\
= \left\langle [ \ov{\mc M}_{\sf \Xi} ]^{\rm vir}, \alpha_1 \otimes \cdots \otimes \alpha_n \otimes {\rm PD} (\Delta_X )  \otimes \iota_\Pi^* \beta \right\rangle.
\end{multline}
Summing over all equivariant curve classes $\ubar B$, we obtain the first splitting property for non-separating nodes.

For the second splitting property, besides the {\bf (Cutting Edge)} and the {\bf (Composition)}, one also needs the {\bf (Disconnected Graph)} property. Indeed, using the same notations as above, with the constrain on $g$ and $n$ replaced by $g \geq 0$ and $n\geq 0$ with $2g + n > 3$, and instead of shrinking a non-separating loop in $\sf \Gamma$, we shrink a separating one. Then the decorated dual graph $\sf \Xi$ obtained by normalizing $\sf \Pi$ is isomorphic to a disjoint union $\sf \Gamma_1 \sqcup \sf \Gamma_2$. Then by the same argument as above, one still obtains \eqref{eqn62}. Now using the {\bf (Disconnected Graph)} property, one has
\beqn
 \left[ \ov{\mc M}_{\sf \Xi} \right]^{\rm vir} =  \left[ \ov{\mc M}_{\sf \Gamma_1} \right]^{\rm vir} \otimes  \left[ \ov{\mc M}_{\sf \Gamma_2} \right]^{\rm vir}. 
\eeqn
Further, using the K\"unneth formula we decompose 
\beqn
{\rm PD} ( \Delta_X )  = \sum_{i=1}^l \delta_i \otimes \delta^i \in H^* ( X; {\mb Q}  ) \otimes H^*  ( X; {\mb Q} ),
\eeqn
and
\beqn
\iota_{\Pi}^* \beta = \sum_{j=1}^m \beta_j \otimes \beta^j \in H^* ( \ov{\mc M}_{{ \Gamma}_1}; {\mb Q} ) \otimes H^* ( \ov{\mc M}_{{ \Gamma}_2}; {\mb Q} ).
\eeqn
Moreover, let the tails belonging to the $\Gamma_1$ side have indices $i_1  < \cdots < i_{n_1}$ and other tails have indices $j_1 < \cdots < j_{n_2}$. Then one obtains
\beqn
\begin{split}
&\ \left\langle  [ \ov{\mc M}_{\sf \Gamma} ]^{\rm vir}, \alpha_1 \otimes \cdots \otimes \alpha_n \otimes \beta \cup \gamma_\Pi	 \right\rangle \\
&\ \ \ = \sum_{1 \leq i \leq l}^{1 \leq j \leq m} \epsilon \left( \begin{array}{c} \alpha_1, \cdots, \alpha_n \\ i_1, \ldots, i_{n_1}, j_1, \ldots, j_{n_2} \end{array} \right) \left\langle   [ \ov{\mc M}_{\sf \Gamma_1}  ]^{\rm vir}, \alpha_{i_1} \otimes \cdots \otimes \alpha_{i_{n_1}} \otimes \delta_{i} \otimes \beta_j \right\rangle \\
&\ \ \ \ \ \ \ \ \cdot \left\langle   [ \ov{\mc M}_{\sf \Gamma_1}  ]^{\rm vir}, \alpha_{j_1} \otimes \cdots \otimes \alpha_{j_{n_2}} \otimes \delta^i \otimes \beta^j \right\rangle.
\end{split}
\eeqn
Here $\epsilon (\cdot) \in \{\pm 1\}$ is the sign of the permutation of odd degree classes in $\alpha_1, \ldots, \alpha_n$ (the parities of the degrees are considered before the degree shifting). Summing over all equivariant curve classes $\ubar B$, one obtains the splitting property for separating nodes. 

\subsection{Relation with Gromov--Witten theory}\label{subsection62}

An interesting and important question to ask is what the relation between the GLSM correlation functions defined here and the Gromov--Witten invariants.

For convergence concern, take the subring of the Novikov field
\beqn
\Lambda_+  = \Big\{ \sum_{i=1}^\infty a_i T^{\lambda_i} \in \Lambda\ |\ \lambda_i > 0 \Big\}.
\eeqn
For $\alpha \in H^*(X; \Lambda_+)$, we have the {\it GLSM potential}
\beqn
\tilde \tau_g(\alpha) =  \sum_{n \geq 0} \langle \alpha \otimes \cdots \otimes \alpha; 1 \rangle_{g, n}^{\rm GLSM}
\eeqn
and the {\it Gromov--Witten potential}
\beqn
\tau_g(\alpha) = \sum_{n \geq 0} \langle \alpha \otimes \cdots \otimes \alpha; 1 \rangle_{g, n}^{\rm GW}.
\eeqn
The corresponding invariants are just derivatives of $\tilde \tau_g$ or $\tau_g$ at $\alpha = 0$, namely
\begin{align*}
&\ \langle \alpha_1, \cdots, \alpha_n \rangle_{g,n}^{\rm GW} = \frac{\partial^n \tau_g}{\partial \alpha_1 \cdots \partial \alpha_n}(0),\ &\ \langle \alpha_1, \ldots, \alpha_n \rangle_{g, n}^{\rm GLSM} = \frac{\partial^n \tilde \tau_g(\alpha)}{\partial\alpha_1 \cdots \partial \alpha_n }(0).
\end{align*}

Recall that Gromov--Witten invariants can be deformed by inserting extra classes, which corresponds to partial derivatives of $\tau_g$ at a nonzero point.\footnote{However this is not true for GLSM since it is not conformally invariant.} In the case of Lagrangian Floer theory such interior insertions give rise to the so-called {\it bulk deformation} (see \cite{FOOO_book, FOOO_toric_2}). Then the relation between GLSM invariants and Gromov--Witten invariants (of the classical vacuum) can be stated as the following conjecture (see also \cite{Tian_Xu_2017}).

\begin{conjecture}\label{conj62}
There is an element ${\mf c} \in H^*( X; \Lambda_+)$ such that 
\beqn
\frac{\partial^n \tilde \tau_g}{\partial \alpha_1 \cdots \partial \alpha_n} (0) = \frac{\partial^n \tau_g}{\partial \alpha_1 \cdots \partial \alpha_n}({\mf c}). 
\eeqn
\end{conjecture}

Here the class ${\mf c}$ stands for ``correction,'' and it should count pointlike instantons, i.e. solutions to the gauged Witten equation over the complex plane. We give a very brief sketch of the proof here and the details will appear in the forthcoming \cite{Limit2}. The argument is based on an extension of the {\it adiabatic limit} argument which originated from the work of Gaio--Salamon \cite{Gaio_Salamon_2005}. Recall that we have chosen a fibrewise area form $\sigma_{g, n}$ on the universal curve over $\ov{\mc M}_{g, n}$ of cylindrical type such that over long cylinders the perimeters are a fixed constant (say 1). This choice is used to define the gauged Witten equation. Replace $\sigma_{g, n}$ by $\epsilon^{-2} \sigma_{g, n}$ and let $\epsilon$ to to zero. For the rescaled metric, the energy functional of a gauged map $(P, A, u)$ reads
\beqn
E_\epsilon(P, A, u) = \frac{1}{2} \int_{\Sigma_{\mc C}} \Big( \| d_A u\|^2 + \epsilon^2 \| F_A\|^2 + \epsilon^{-2} \| \mu(u) \|^2 + 2 \epsilon^{-2} \| \nabla W(u) \|^2 \Big) {\rm vol}_{\Sigma_{\mc C}}.
\eeqn
Here the integral and the norms are still taken with respect to the non-rescaled metrics, and we also rescale $W$ to $\epsilon^{-1} W$. Then the gauged Witten equation becomes 
\begin{align*}
&\ \ov\partial_A u + \epsilon^{-1} \nabla W(u) = 0,\ &\ * F_A + \epsilon^{-2} \mu(u) = 0.
\end{align*}

Then one can prove a theorem about compactness in the adiabatic limit, saying that for a given sequence of $\epsilon_k \to 0$ and a sequence of solutions $(P_k, A_k, u_k)$ with a uniform energy bound, a subsequence converges in a suitable sense to a holomorphic curve to $X$ away from finitely many points in the domain. Moreover, near each one of the finitely many points where the convergence does not hold, by rescaling the sequence converges to a {\it pointlike instanton}, namely a solution to the gauged Witten equation
\begin{align*}
&\ \ov\partial_A u + \nabla W(u) = 0,\ &\ * F_A + \mu(u) = 0
\end{align*}
over the complex plane ${\mb C}$. These instantons are generalizations of {\it affine vortices} studied in the context of vortex equation in \cite{Gaio_Salamon_2005} \cite{Ziltener_Decay, Ziltener_thesis, Ziltener_book} \cite{VW_affine} \cite{Venugopalan_Xu}. Counting pointlike instantons gives the class ${\mf c}$ in the same way as the correction term constructed by the second-named author with Woodward in \cite{Woodward_Xu} for Lagrangian Floer theory of GIT quotients. 

To prove Conjecture \ref{conj62} one also needs to prove an inverse of the compactness theorem. The ${\mf c}$-deformed Gromov--Witten invariants of $X$ can be thought of as counting holomorphic curves in $X$ with ambient bubble trees, which is a singular object in the $\epsilon = 0$ boundary of the mixed moduli space. One needs to show that in the $\epsilon = 0$ boundary the mixed moduli space has a collaring neighborhood parametrized by $\epsilon$, in the virtual sense. Such a gluing construction can be generalized from that in \cite{Xu_glue}, although the analysis could be much more involved.

\subsection{Relation with the mirror symmetry}

Mirror symmetry predicts that the A-twisted topological string theory of a Calabi--Yau manifold $M$ is isomorphic to the B-twisted topological string theory of its mirror Calabi--Yau manifold $M^\vee$, in particular there is an identification between the A-side moduli (K\"ahler parameters) and the B-side moduli (complex parameters). The identification is generally called the {\it mirror map}. In the case that $M$ is realized as the classical vacuum of a gauged linear sigma model $(V, G, W, \mu)$, many evidences indicate that the mirror map can be understood purely from the A-side, and it counts the correction between the GLSM and the low-energy nonlinear sigma model (the Gromov--Witten theory) (see \cite{Morrison_Plesser_1995}). A reasonable explanation of this phenomenon was given by Hori--Vafa \cite{Hori_Vafa}, which says that the mirror symmetry for vector spaces is trivial and the only nontrivial effect for the compact Calabi--Yau comes from the pointlike instanton correction. From such a principle, there have been many A-side calculations of the mirror map, for example, \cite{Givental_96} \cite{CLLT} \cite{Gonzalez_Woodward_mmp} \cite{CK_2020}. In particular, the counting of affine vortices as done in \cite{Gonzalez_Woodward_mmp}, provides the A-side mirror map interpretation for gauged linear sigma model spaces with zero superpotential. Our forthcoming work \cite{Limit2} will then give the foundation of the general case with nonzero superpotential, following the idea of \cite{Morrison_Plesser_1995}.

\bibliography{mathref}

\bibliographystyle{amsplain}

\end{document}